\newtheorem{thm}{Theorem}[section]
\newtheorem{cor}[thm]{Corollary}
\newtheorem{lem}[thm]{Lemma}
\newtheorem{ex}[thm]{Example}
\theoremstyle{definition}
\theoremstyle{remark}
\newtheorem{remark}[thm]{Remark}
\numberwithin{thm}{section}
\DeclareMathOperator{\RE}{Re}
\DeclareMathOperator{\IM}{Im}
\newcommand{\R}{{\mathord{\mathbb R}}}
\newcommand{\N}{{\mathord{\mathbb N}}}
\newcommand{\J}{\mathcal{J}}
\newcommand{\W}{\mathcal{W}}
\newcommand{\C}{{\mathord{\mathbb C}}}
\newcommand{\Z}{{\mathord{\mathbb Z}}}
\def\idty{{\mathchoice {\mathrm{1\mskip-4mu l}} {\mathrm{1\mskip-4mu l}} %
{\mathrm{1\mskip-4.5mu l}} {\mathrm{1\mskip-5mu l}}}}
\DeclareMathOperator{\Tr}{Tr}
\DeclareMathOperator{\Span}{span}
\DeclareMathOperator{\diag}{diag}
\DeclareMathOperator{\sgn}{sgn}
\begin{document}

\title[Entanglement of a class of non-Gaussian states]{Entanglement of a class of non-Gaussian states in disordered harmonic oscillator systems}

\author[H. Abdul-Rahman]{Houssam Abdul-Rahman}
\address{Department of Mathematics\\
University of Arizona\\
Tucson, AZ 85721, USA}
\email{houssam@math.arizona.edu}
\date{\today}

\begin{abstract}
For disordered harmonic oscillator systems over the $d$-dimensional lattice, we consider the problem of finding the bipartite entanglement of the uniform ensemble of the energy eigenstates associated with a particular number of modes. Such ensemble define a class of mixed, non-Gaussian entangled states that are labeled, by the energy of the system, in an increasing order.  We develop a novel approach to find the exact logarithmic negativity of this class of states. We also prove entanglement bounds and demonstrate that the low energy states follow an area law.
\end{abstract}

\maketitle

%
%

\allowdisplaybreaks
\section{Introduction}

Quantum entanglement  is considered as a key resource for many quantum technologies and information processing \cite{Nielsen2000, Horodecki09}. Gaussian states are profoundly important  in quantum information \cite{Wangetal08,Weedbrook12}; however, the entanglement phenomenon is not confined to this well studied set of states.   It has recently become apparent that there is a fundamental need to understand the entanglement beyond Gaussian states \cite{Gomesetal09,Strobeletal14,Walschaerseta17}, as
non-Gaussian states are not only advantageous but sometimes necessary to perform certain quantum information tasks; for instance, non-Gaussian states are required for entanglement distillation \cite{Dong08,Eisert02,Hage08}; they can be used as key ingredients to improve quantum teleportation
\cite{Dell07,Opartny04,Kaushik15}; and they are necessary for universal quantum computation with continuous variables systems  \cite{Bartlett02}.

On the theoretical level, the study of entanglement of many-body systems has been limited to Gaussian states, see e.g., \cite{ANSS16,EisertCramerPlenio2010,LamiAdessoetal17, NSS2}, where the problem of finding entanglement reduces to the study of the corresponding correlation (covariance) matrices. The latter are well structured matrices of dimension equal to double the volume of the many-body system. The entanglement story of non-Gaussian states has no such happy ending. General states of many-body systems are usually extremely complicated objects with a number of parameters exponentially large in the volume of the system.  For this reason, there are almost no rigorous results on the formidable task of understanding the entanglement of non-Gaussian states.

Understanding energy eigenstates is one of the main problems in many-body and statistical physics. In particular, for disordered regimes, they are of utmost importance in the context of \textit{many-body localization} (MBL), a phenomenon that has recently received strong attention in the physical community, see e.g., \cite{AbanicPepic17,Agarwaletal,Altmam-etal-15,NH} for recent reviews with extensive lists of references. Proving rapid \textit{decay of correlations} or \textit{area laws} for eigenstates (at least for the low lying eigenstates) is believed to be a signature of the MBL phase. Examples of mathematical results concerning eigenstate localization beyond the ground state include the XY spin chain in random transversal field (see \cite{ANSS17, ARS15} for area laws and \cite{SimsWarzel} for decay of correlations.) and the Tonks-Girardeau gas \cite{SW}. Recently, \cite{EKS} proved exponential clustering of all eigenstates throughout the droplet spectrum of the XXZ chain in a random field and \cite{BW18} proved area laws for the states in the droplet regime. In fact, \cite{BW18} represents one of the very few rigorous results on entanglement for non-Gaussian states.

The system of coupled harmonic oscillators is unitary equivalent to a free boson system, reducing it to the diagonalization of an effective one-particle Hamiltonian.
 The ground state and thermal states of harmonic oscillators are two simple examples of Gaussian states with a long history of entanglement results, see e.g.,  \cite{Audenaert2002,SCW} for the deterministic gapped case and \cite{NSS2} for the disordered gapless case. It has been shown that the entanglement of these Gaussian states follows area laws. But unlike the free fermion systems case, the eigenstates of the coupled harmonic oscillators beyond the ground state are non-Gaussian, and thus understanding their entanglement is an existing challenge. Recently, \cite{ARSS17} proved exponential clustering results of arbitrary eigenstates for disordered harmonic oscillator; the corresponding bound increases with the highest mode, which indicates area laws for the low lying eigenstates \cite{BH13,BH}.

In this paper, we consider the problem of finding the entanglement of a class of mixed, non-Gaussian states of (disordered) harmonic oscillator systems. These are states written as a uniform ensemble of energy eigenstates associated with a particular number of modes $N$ (the $N$-modes ensemble states). The expected energy of the system in this class of non-Gaussian states increases with the total number of modes, which means that they can be labeled, in an explicit way, in an increasing order with respect to the number of modes.  As expected, the resulting entanglement formula for the $N$-modes ensemble states is more complicated than the one associated with Gaussian states, but it is fully characterized by the static ground state correlation matrix. The latter is known to decay exponentially, which allows to find an entanglement bound that is proportional to the surface area of the boundary region between the two subregions, but grows linearly in the total number of modes (see Theorem \ref{thm:AreaLaw}). Such estimation is understood as a  genuine area law for low particle concentration (relative to the volume of the system), and it demonstrates MBL in the so-called \textit{zero temperature regime}, as it indicates that the low lying eigenstates are weakly entangled.

We build on the ideas by Vidal and Werner \cite{VidalWerner} and on the arguments presented in \cite{NSS2} to develop the machinery needed  to find the exact formula for the logarithmic negativity of the $N$-modes ensemble states. Theorem \ref{thm:Weyl-Eigen} provides general formulas for the expectation of the Weyl operators at arbitrary eigenstates. These formulas include products of multi-variable Laguerre polynomials as pre-factors for the gaussian term, which reflects the fact that the corresponding states are non-Gaussian. Lemma \ref{lem:I-N} represents the central technical fact about multi-variable Laguerre polynomials. It distinguishes the $N$-modes ensemble states from general eigenstates, it says that the ensemble averaged multi-variable Laguerre polynomials is a (generalized Laguerre) polynomial of the \textit{sum of all variables}.

The paper is organized as follows. In the next section, we describe the model, introduce our class of states, and present the main results. Remark \ref{rem:det} comments on the validity of our results for a class of deterministic gapped oscillator models.
 Section \ref{sec:Weyl} includes new results regarding  the Weyl operator expectations. In Section \ref{sec:PT:diag} we present the complete diagonalization of the partial transpose as needed to calculate the logarithmic negativity. The area laws are then proven in Section \ref{sec:LN:bound}.

\section*{Acknowledgment}
The author would like to thank Bruno Nachtergaele, Robert Sims, and G\"{u}nter Stolz for insightful  discussions and comments.

\section{The model and results}\label{sec:Model}
\subsection{The Model}
For integers $a<b$ and $d\geq 1$, let $\Lambda$ be the $d$-dimensional box $\Lambda=\Z^d\cap[a,b]^d$, equipped with the $\ell^1$-metric, where we use the notation $|\cdot|$ to denote the 1-norm. We consider harmonic oscillator systems defined over $\Lambda$, coupled by nearest neighbor quadratic interactions, quantified by the interaction's strength parameter $\lambda>0$. The Hamiltonian is
\begin{equation}\label{eq:def:H}
H_\Lambda=\sum_{x\in\Lambda}(p_x^2+k_x q_x^2)+
\sum_{{\tiny\begin{array}{c}\{x,y\}\subset\Lambda\\|x-y|=1\end{array}}}\lambda(q_x-q_y)^2,
\end{equation}
acting on the Hilbert space
\begin{equation}
\mathcal{H}_\Lambda=\bigotimes_{x\in\Lambda}
\mathcal{L}^2(\R)\cong\mathcal{L}^2(\R^\Lambda),
\end{equation}
where for each site $x\in\Lambda$, we denote the position operator by  $q_x$, i.e., the multiplication operator by $q_x$  in $\mathcal{H}_\Lambda$, and by $p_x=-i\partial/\partial q_x$, we denote the corresponding momentum operator. It is well know, see e.g., \cite{ReedSimon}, that the position and the momentum operators are unbounded self-adjoint on suitable domains and satisfy the commutation relations
\begin{equation}\label{eq:def:pq}
[q_x,q_y]=[p_x,p_y]=0\ \quad\text{and}\quad [q_x,p_y]=i\delta_{x,y}\idty
\quad\text{ for all }x,y\in\Lambda,\end{equation}
where $\delta_{x,y}$ is the Kronecker delta function. For each $x\in\Lambda$,  $k_x$ is the corresponding spring constant.  We regard the elements of the sequence $\{k_x\}_{x\in\Lambda}$ as i.i.d. random variables with absolutely continuous distribution given by bounded density, supported in $[0,k_{\max}]$ with $0<k_{\max}<\infty$. The coupling sum in (\ref{eq:def:H}) is taken over all undirected edges $\{x,y\}$ from $\Lambda$ that correspond to the nearest neighbor sites $x$ and $y$, i.e., $|x-y|=1$.

As is well known, the analysis of the Hamiltonian $H_\Lambda$ reduces to the study of the finite volume Anderson model
\begin{equation}\label{eq:def:h}
h_\Lambda=\lambda h_{0,\Lambda}+k,
\end{equation}
where $h_{0,\Lambda}$ is the negative discrete Laplacian on $\Lambda$, and $k=\diag\{k_x,\, x\in\Lambda\}$. In particular,  the harmonic oscillators Hamiltonian $H_\Lambda$ can be written as the bosonic system
\begin{equation}
H_\Lambda=\frac{1}{2}\begin{pmatrix}
            a^T & (a^*)^T \\
          \end{pmatrix}\begin{pmatrix}
                         h_\Lambda-\idty & h_\Lambda+\idty \\
                         h_\Lambda+\idty & h_\Lambda-\idty \\
                       \end{pmatrix}\begin{pmatrix}
                                      a \\
                                      a^* \\
                                    \end{pmatrix}.
\end{equation}
where $\begin{pmatrix}
        a^T & (a^*)^T \\
      \end{pmatrix}
$ and $\begin{pmatrix}
        a \\
        a^* \\
      \end{pmatrix}$ are the $2|\Lambda|$ row and column vectors of the \textit{annihilation} and \textit{creation} operators
\begin{equation}\label{eq:def:a}
a_x=\frac{1}{\sqrt{2}}(q_x+ip_x),\quad a_x^*=\frac{1}{\sqrt{2}}(q_x-ip_x),\text{\,for\,}x\in\Lambda.
\end{equation}
As a direct consequence of (\ref{eq:def:pq}) and (\ref{eq:def:a}), the operators $\{a_x\}_{x\in\Lambda}$ satisfy the \textit{canonical commutation relations} (CCR)
\begin{equation}\label{eq:CCR}
[a_x,a_y]=[a_x^*,a_y^*]=0, \text{\ and\ } [a_x,a_y^*]=\delta_{x,y}\idty, \text{\, for all \,} x,y\in\Lambda.
\end{equation}

It is clear that $h_\Lambda$ is self-adjoint and since we work in finite volume, the spectrum of $h_\Lambda$ is discrete and, under our assumption of absolutely continuous distribution of the $\{k_x\}_{x\in\Lambda}$, almost surely simple. Moreover, using that $0\leq h_{0,\Lambda}\leq 4d$, the spectrum of $h_\Lambda$ is included within the interval
\begin{equation}\label{eq:def:spectrumh}
\sigma(h_\Lambda)\subseteq\left[\min_{x\in\Lambda}k_x,(4d\lambda+k_{\max})\right].
\end{equation}
Meaning that $h_\Lambda$ is almost surely positive and  the operator $h_\Lambda^{-\frac{1}{2}}$ is not uniformly bounded in the volume of the system and the disorder.

By standard results, see  Proposition A.3(c) in \cite{NSS1}, the elements of $h_\Lambda^{-1/2}$ decay exponentially away from the diagonal after averaging the disorder, there exist constants $C<\infty$ and $\mu>0$ such that,
\begin{equation}\label{eq:ecl}
\mathbb{E}\left(|\langle\delta_x,h_\Lambda^{-1/2}\delta_y\rangle|\right)
\leq C e^{-\mu|x-y|}
\end{equation}
for all $x,y\in\Lambda$, where $\{\delta_x,x\in\Lambda\}$ is the canonical basis of $\ell^2(\Lambda)$ and $\mathbb{E}(\cdot)$ is the disorder average. We remark here that (\ref{eq:ecl}) leads to the exponential decay of the static position-position correlations at the ground state of $H_\Lambda$, see e.g., \cite{ARSS17,NSS1}.

The real nonnegative operator $h_\Lambda$ can be diagonalized in terms of an orthogonal matrix $\mathcal{O}$, i.e., $
      h_\Lambda=\mathcal{O}\gamma^2\mathcal{O}^T$
      where $\gamma^2=\diag\{\gamma_x^2,\ x\in\Lambda\}$. This defines the Bogoliubov transformation
      \begin{equation}\label{eq:a-to-b}
      b=\frac{1}{2}(\gamma^\frac{1}{2}+\gamma^{-\frac{1}{2}})\mathcal{O}^T\ a+ \frac{1}{2}(\gamma^\frac{1}{2}-\gamma^{-\frac{1}{2}})\mathcal{O}^T\ a^*,
      \end{equation}
 which leads to the second quantization representation of the harmonic oscillator systems $H_\Lambda$
\begin{equation}\label{eq:def:H-bosons}
H_\Lambda=\sum_{k=1}^{|\Lambda|}\gamma_k(2b_k^* b_k+\idty).
\end{equation}
For more details regarding this reduction to a free boson system we refer the reader to e.g., \cite{NSS1,NSS2,ARSS17}.
Due to the Bogoliubov transformation (\ref{eq:a-to-b}), the operators $\{b_k\}_k$  satisfy the CCR. These modes (or quasi-particles) are mixing the creation and annihilation operators.
The complete set of eigenfunctions of $H_\Lambda$ consists of the Fock states generated from the vacuum $\psi_0$ of the $b$ operators ($b_x\psi_0=0$ for all $x\in\Lambda$.)
\begin{equation}\label{eq:def:e-fun-bosons}
\psi_{\alpha}=\prod_{k=1}^{|\Lambda|}
\frac{1}{\sqrt{\alpha_k}}(b_k^*)^{\alpha_k}
\psi_0,
\end{equation}
for any occupation number vector $\alpha=(\alpha_1,\ldots,\alpha_{|\Lambda|})\in\N^{|\Lambda|}_0$.
It follows directly that
\begin{equation}\label{eq:H:eigenvalues}
H_\Lambda\psi_\alpha=\left(\sum_{k=1}^{|\Lambda|}\gamma_{k}(2\alpha_k+1)\right)\,\psi_\alpha.
\end{equation}
We note here that beside the total number of modes, the distribution of modes over $\Lambda$ plays a crucial role in the determination of the energy at a given eigenstate; yielding no explicit way, in terms of $\alpha$, to label all energy eigenstates in order of increasing energy levels. As is clear from (\ref{eq:H:eigenvalues}), the ground state gap is $2\min_{k}\gamma_k$.
\begin{remark}\label{rem:det}
 The results we proved below are also valid for a class of  the deterministic, gapped oscillator models over the $d$-dimensional box $\Lambda$
\begin{equation}
\tilde{H}_\Lambda=\sum_{x\in\Lambda}p_x^2+\sum_{x,y\in\Lambda} q_x h^{(q)}_{xy} q_y
\end{equation}
defined over the Hilbert space $\mathcal{H}_\Lambda$, where $h^{(q)}_\Lambda:=(h^{(q)}_{xy})_{x,y\in\Lambda}$ is real, symmetric, and positive definite banded matrix. We require that $h^{(q)}_\Lambda$ is bounded and boundedly invertible, uniformly in the volume $|\Lambda|$; that is, there exist constants $0<C_1<C_2<\infty$ independent of $\Lambda$ such that
\begin{equation}\label{eq:hq:bounds}
C_1\leq\|h^{(q)}_\Lambda\|\leq C_2.
\end{equation}
We assume that the spectrum of $h^{(q)}_\Lambda$ is simple. Here, $h^{(q)}_\Lambda$ substitutes $h_\Lambda$, given in (\ref{eq:def:h}), in the disordered regime described above. The analogue of the exponential decay (\ref{eq:ecl}) is satisfied, of course with no need to average, see e.g., \cite{Benzietal15}. Moreover, it follows from (\ref{eq:H:eigenvalues}) and (\ref{eq:hq:bounds}) that $\tilde{H}_\Lambda$ is a gapped system, i.e., there is a robust ground state gap.
\end{remark}

\subsection{Results}
We study the bipartite entanglement of a class of states describing the harmonic oscillator Hamiltonian $H_\Lambda$. Explicitly, we fix a subregion $\Lambda_0\subset\Lambda$ and consider the bipartite decomposition $\mathcal{H}_\Lambda=\mathcal{H}_{\Lambda_0}\otimes\mathcal{H}_{\Lambda\setminus\Lambda_0}$, where
\begin{equation}\label{eq:decomposition:H}
\mathcal{H}_{\Lambda_0}=\bigotimes_{x\in\Lambda_0}\mathcal{L}^2(\R)\quad\text{ and }\quad \mathcal{H}_{\Lambda\setminus\Lambda_0}=\bigotimes_{x\in\Lambda\setminus\Lambda_0}\mathcal{L}^2(\R).
\end{equation}
For any state $\rho\in\mathcal{B}(\mathcal{H}_\Lambda)$, denote by $\rho^{T_1}$, the \textit{partial transpose} of $\rho$ with respect to the first component in the decomposition (\ref{eq:decomposition:H}), and the \textit{logarithmic negativity} $\mathcal{N}(\rho)$ of $\rho$ is the logarithm of the trace norm of the state's partial transpose
\begin{equation}
\mathcal{N}(\rho)=\log\|\rho^{T_1}\|_1.
\end{equation}
It is well known that the logarithmic negativity is an upper bound to the distillable entanglement \cite{VidalWerner}.
We remark that since the bosons $\{b_k\}_k$ are mixing the spacial creation and annihilation operators as in (\ref{eq:a-to-b}), the eigenstates $\psi_\alpha$ are entangled with respect to the decomposition (\ref{eq:decomposition:H}).
The problem of finding their entanglement bumps into the fact that the ground state is the only Gaussian eigenstate of $H_\Lambda$, see the beginning of Section \ref{sec:Weyl} for a brief definition of Gaussian states on the CCR, references for more details are also provided.

In this work, we find the logarithmic negativity of a class of mixed, entangled, and non-Gaussian states. Explicitly, these states are defined as follows: for each $N\in\mathbb{N}$, we consider the set $\J_N$ that consists of all occupation number vectors that are describing a total of $N$ modes,
\begin{equation}\label{eq:def:Jn}
\J_N:=\{\alpha=(\alpha_1,\ldots,\alpha_{|\Lambda|})\in\mathbb{N}_0^{|\Lambda|}; \ \sum_j\alpha_j=N\},
\end{equation}
then the state under consideration is the uniform statistical ensemble of the $N$-modes eigenstates
\begin{equation}\label{eq:def:rhoN}
\rho_N:=\frac{1}{|\J_N|}\sum_{\alpha\in \J_N}|\psi_\alpha\rangle\langle\psi_\alpha|.
\end{equation}
Note that the $N$-modes ensemble $\rho_N$ is the orthogonal projection onto the Fock space sector $\mathcal{F}_N$ characterized by  $N$ modes, $\mathcal{F}_N=\Span\{\psi_\alpha;\ \sum\alpha_j=N\}$.
Moreover, $\rho_N$ is the maximum entropy state of the total (quasi-)particle number operator $N^{(b)}:=\sum_{j=1}^{|\Lambda|}b_j^* b_j$, see e.g., \cite{MES-Eisert}.

It is remarkable that, in contrast to the case with the eigenstates, the expected energy of the harmonic system at $\rho_N$ is increasing in the total number of modes $N$, see Lemma \ref{sec:App:Energy},
\begin{equation}\label{eq:H:rhoN:energy}
\Tr[ H_\Lambda \rho_N]=\left(1+\frac{2 N}{|\Lambda|}\right)\sum_{k}\gamma_k.
\end{equation}
Thus, the $N$-modes ensemble states can be used to study the energy dependence of their entanglement. This is
beside the fact that the entanglement of $\rho_N$ is made, throughout this work, theoretically accessible. As understood intuitively, it follows from (\ref{eq:H:rhoN:energy}) that the energy of the system goes to the ground state energy in the thermodynamic limit. Note also that the expected energy of $H_\Lambda$ at the $N$-modes ensemble with $N=K|\Lambda|$ modes matches the energy of the system at the eigenstate that correspond to the occupation number vector $(K,K,\ldots,K)$.

The $N$-modes ensemble $\rho_N$ is a mixed state by construction, it is obviously entangled since all the energy eigenstates of $H_\Lambda$ are entangled, and aside from the degenerate case associated with the ground state $\rho_{N=0}=|\psi_0\rangle\langle\psi_0|$, $\rho_N$ is not Gaussian.

Our first new result is the exact evaluation of the logarithmic negativity of the $N$-modes ensemble $\rho_N$, which is done through the complete diagonalization of the partial transpose of $\rho_N$ in Theorem \ref{thm:rhoT1:eigen} below. This diagonalization is made possible using the ideas in Theorems \ref{thm:Weyl-Eigen} and \ref{thm:rho1}. The former theorem gives a formula for the Weyl operator expectations at arbitrary eigenstates, then Theorem \ref{thm:rho1} defines an operator that corresponds to a scaled version  of the Weyl operator expectations.

The second result provides a growing area law in the total number of modes. Meaning that, on average, the logarithmic negativity of $\rho_N$ with respect to the decomposition (\ref{eq:decomposition:H}) is proportional to the surface area between the two subsystem, with a coefficient that grows linearly in $N$.
Let's denote the boundary of $\Lambda_0$ by $\partial\Lambda_0$,
\begin{equation}
\partial\Lambda_0=\{x\in\Lambda_0;\, \exists y\in\Lambda\setminus\Lambda_0 \text{\,with\,}|x-y|=1\}.
\end{equation}
 We will prove the following theorem in Section \ref{sec:LN:bound}.
\begin{thm}\label{thm:AreaLaw}
For any $\Lambda_0\subset\Lambda$, $N\in\mathbb{N}_0$, and the corresponding N-modes ensemble $\rho_N$ as in (\ref{eq:def:rhoN}), there exists $\tilde{C}<\infty$ such that
\begin{equation}\label{eq:AreaLaw}
\mathbb{E}\left(\mathcal{N}(\rho_N)\right)\leq \tilde{C}(2N+1)|\partial\Lambda_0|
\end{equation}
where the constant $\tilde{C}$ is independent of $N$, $\Lambda_0$ and $\Lambda$, but it depends on the parameters $C$ and $\mu$ from (\ref{eq:ecl}) as well as  on $\lambda$, $k_{\max}$ and $d$.
\end{thm}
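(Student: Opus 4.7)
The proof proceeds in three stages. The starting point is the complete diagonalization of $\rho_N^{T_1}$ provided by Theorem \ref{thm:rhoT1:eigen}, which expresses $\mathcal{N}(\rho_N)=\log\|\rho_N^{T_1}\|_1$ as a spectral sum over eigenvalues built from the Weyl--expectation operator of Theorem \ref{thm:rho1}. By Theorem \ref{thm:Weyl-Eigen}, the Weyl expectations at the $N$-modes ensemble factor as the ground-state (Gaussian) Weyl expectation times a multi-variable Laguerre polynomial in quadratic forms of $h_\Lambda^{-1/2}$, and Lemma \ref{lem:I-N} collapses the $\J_N$-average of these polynomials to a single-variable generalized Laguerre polynomial of degree $N$. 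This should expose the spectrum of $\rho_N^{T_1}$ as a deformation of the standard Gaussian Vidal--Werner spectrum associated with the ground state, with a polynomial pre-factor whose $N$-dependence is explicit.

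The second stage converts this spectral representation into a Schatten-type upper bound. By analogy with the Gaussian case treated in \cite{NSS2}, I expect an inequality of the form
\begin{equation}
\mathcal{N}(\rho_N)\;\leq\; \sum_{j}\log\!\big(1+(2N+1)\, s_j\big),
\end{equation}
where the $s_j$ are singular values of the off-diagonal block $B:=P_{\Lambda_0}\,h_\Lambda^{-1/2}\,P_{\Lambda\setminus\Lambda_0}$ (and/or the analogous block for $h_\Lambda^{1/2}$), with $P_{\Lambda_0}$ the orthogonal projection of $\ell^2(\Lambda)$ onto $\ell^2(\Lambda_0)$. The factor $(2N+1)$ emerges from the Laguerre polynomial through uniform estimates on the magnitude of its coefficients near the Gaussian contour. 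Applying $\log(1+x)\leq x$ and the elementary trace-norm bound $\|B\|_1\leq\sum_{x,y}|\langle\delta_x,h_\Lambda^{-1/2}\delta_y\rangle|$, one obtains
\begin{equation}
\mathcal{N}(\rho_N)\;\leq\;(2N+1)\sum_{x\in\Lambda_0}\sum_{y\in\Lambda\setminus\Lambda_0}\big|\langle\delta_x,h_\Lambda^{-1/2}\delta_y\rangle\big|.
\end{equation}

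In the third stage, I take the disorder expectation and insert the exponential clustering bound (\ref{eq:ecl}). For fixed $x\in\Lambda_0$, a standard geometric summation gives $\sum_{y\in\Lambda\setminus\Lambda_0}e^{-\mu|x-y|}\leq C_d e^{-\mu\,\mathrm{dist}(x,\partial\Lambda_0)}$, and the remaining sum over $x\in\Lambda_0$ is then dominated by a constant times $|\partial\Lambda_0|$, with the constant depending only on $\mu$, $C$, and $d$. Combining these estimates yields (\ref{eq:AreaLaw}) with a $\tilde{C}$ depending only on $C$, $\mu$, $\lambda$, $k_{\max}$, and $d$, as claimed.

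The main obstacle is the first reduction: one must extract the clean bound $\log\!\big(1+(2N+1)s_j\big)$ from the highly non-Gaussian eigenvalue structure of $\rho_N^{T_1}$ in Theorem \ref{thm:rhoT1:eigen}. This is precisely where Lemma \ref{lem:I-N} is indispensable, since the naive multi-variable Laguerre spectrum would not produce a linear-in-$N$ pre-factor; the collapse to a single variable provided by the lemma is what makes the $N$-dependence of the pre-factor tractable and polynomial of the right order. Once this structural estimate is established, the remaining steps are routine applications of operator-norm inequalities and the known exponential decay of $h_\Lambda^{-1/2}$.
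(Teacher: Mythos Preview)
Your architecture is right---diagonalize $\rho_N^{T_1}$, turn the trace norm into a sum over matrix elements of $h_\Lambda^{-1/2}$, then average and sum geometrically---but your second stage misidentifies both the mechanism and the role of Lemma~\ref{lem:I-N}.

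First, Lemma~\ref{lem:I-N} is \emph{already} consumed in the proof of Theorem~\ref{thm:rhoT1:eigen}; once you have the decomposition
\[
\rho_N^{T_1}=\frac{1}{|\J_N|}\sum_{\alpha\in\J_N} U\Big(\bigotimes_{k}\rho_{d_k}^{(\alpha_k)}\Big)U^*,
\]
no further Laguerre analysis is needed. The paper simply applies the triangle inequality and the multiplicativity of the trace norm on tensor products to get $\|\rho_N^{T_1}\|_1\le |\J_N|^{-1}\sum_{\alpha}\prod_k\|\rho_{d_k}^{(\alpha_k)}\|_1$, and then invokes the explicit trace-norm bound $\|\rho_a^{(\ell)}\|_1\le g_a(\ell)$ from Theorem~\ref{thm:rho1}. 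Since each $\alpha_k\le N$ and $g_a$ is increasing, this gives $\|\rho_N^{T_1}\|_1\le\prod_k g_{d_k}(N)$ and hence
\[
\mathcal{N}(\rho_N)\le \sum_{k:\,d_k\ge 1} N\log d_k + \sum_{k:\,d_k<1}(N+1)\log d_k^{-1}.
\]
The factor $(2N+1)$ therefore does \emph{not} come from ``uniform estimates on Laguerre coefficients near the Gaussian contour''; it comes from the exponents $N$ and $N+1$ in $g_a$, combined through the projection operator $\mathcal{P}_N=N\,\mathbb{P}P^{>1}\mathbb{P}+(N+1)P^{>1}$.

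Second, your conjectured bound $\sum_j\log(1+(2N+1)s_j)$ in terms of singular values of $P_{\Lambda_0}h_\Lambda^{-1/2}P_{\Lambda\setminus\Lambda_0}$ is not what emerges, and your displayed inequality is missing a factor of $\|h_\Lambda^{1/2}\|$. The paper reaches the matrix-element sum via the symplectic spectrum: the $d_k^2$ are eigenvalues of $Z=h_\Lambda^{1/4}\mathbb{P}h_\Lambda^{-1/2}\mathbb{P}h_\Lambda^{1/4}$, one writes $Z=\idty+h_\Lambda^{1/4}\mathbb{P}[h_\Lambda^{-1/2},\mathbb{P}]h_\Lambda^{1/4}$, uses concavity of $\log$, and bounds $\|h_\Lambda^{1/4}\mathcal{P}_Nh_\Lambda^{1/4}\|\le(2N+1)\|h_\Lambda^{1/2}\|$. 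The commutator $[h_\Lambda^{-1/2},\mathbb{P}]$ is what produces the off-diagonal sum. The extra factor $\|h_\Lambda^{1/2}\|\le(4d\lambda+k_{\max})^{1/2}$ is absorbed into $\tilde C$.

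Your third stage is essentially correct and matches the paper.
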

This is a typical area law for the low lying non-Gaussian states $\rho_N$, that is, when $N/|\Lambda|\ll 1$.
The proof follows from the bound in Lemma \ref{lem:LN:upper1} below and the exponential clustering of the ground state, a consequence of  (\ref{eq:ecl}). This has to mean that exponential clustering of the ground state implies an area law for the ground state and for the low lying states close to the ground state.

A variant version of the proof of Theorem \ref{thm:AreaLaw} provides similar entanglement bounds for a larger class  of mixed non-Gaussian states. We construct them as follows: for any $N\in\mathbb{N}_0$ let
\begin{equation}\label{eq:def:Omega}
\Omega_N:=\{\omega=(\omega_0,\omega_1,\ldots,\omega_N)\in[0,1]^{N+1},\ \sum_j\omega_j=1\}
\end{equation}
then for each $\omega\in\Omega_N$ we can define a mixed state that describes the weighted average of the $L$-modes ensemble states for $L=0,\ldots,N$
\begin{equation}\label{eq:def:tilderhoN}
\rho_{\omega,N}:=\sum_{L=0}^{N}\omega_L\rho_L,
\end{equation}
where $\rho_L$ is as in (\ref{eq:def:rhoN}). It follows from (\ref{eq:H:rhoN:energy}) that for each $N\in\mathbb{N}$, the maximum expected energy at $\rho_{\omega,N}$ is the expected energy of the system at the $N$-modes ensemble, i.e.,
$
\max_{\omega\in\Omega_N}\langle H_\Lambda\rangle_{\rho_{\omega,N}}=\langle H_\Lambda\rangle_{\rho_N}$. This makes this class of states less interesting than the $N$-modes states from the energy-entanglement perspective. But one can still argue that we are providing entanglement bounds for a larger class of non-Gaussian states. The following corollary says that the logarithmic negativity of $\rho_{\omega,N}$ has the same bound as $\mathcal{N}(\rho_N)$.
\begin{cor}\label{cor:AreaLaw2}
For any $N\in\mathbb{N}_0$ and $\rho_{\omega,N}$ as in (\ref{eq:def:tilderhoN}) with $\omega\in\Omega_N$ as in (\ref{eq:def:Omega}), then with the same $\tilde{C}>0$ as in (\ref{eq:AreaLaw}) we have
\begin{equation}\label{eq:AreaLaw2}
\mathbb{E}\left(\max_{\omega\in\Omega_N}\
\mathcal{N}(\rho_{\omega,N})\right)\leq \tilde{C}(2N+1)|\partial\Lambda_0|.
\end{equation}
\end{cor}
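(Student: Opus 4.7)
The plan is to reduce the problem to the already-established bound for $\rho_L$, $L \leq N$, via a convexity argument on the partial transpose, and then to exploit the \emph{deterministic} (pointwise in the disorder) form of the estimate that underlies the proof of Theorem \ref{thm:AreaLaw}. First I would note that the partial transpose is linear, so
\begin{equation*}
(\rho_{\omega,N})^{T_1}=\sum_{L=0}^{N}\omega_L\,\rho_L^{T_1}.
\end{equation*}
By the triangle inequality for $\|\cdot\|_1$ and the fact that $\omega\in\Omega_N$ is a probability vector,
\begin{equation*}
\|(\rho_{\omega,N})^{T_1}\|_1\leq\sum_{L=0}^{N}\omega_L\,\|\rho_L^{T_1}\|_1\leq\max_{0\leq L\leq N}\|\rho_L^{T_1}\|_1,
\end{equation*}
and hence, taking logarithms and then supremizing over $\omega\in\Omega_N$, we obtain a disorder-wise bound
\begin{equation*}
\max_{\omega\in\Omega_N}\mathcal{N}(\rho_{\omega,N})\leq\max_{0\leq L\leq N}\mathcal{N}(\rho_L).
\end{equation*}
This step is essentially free and makes no use of the specific structure of the states beyond the linearity of $T_1$ and convexity.

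Next I would control $\max_L\mathcal{N}(\rho_L)$ \emph{without} swapping max and expectation naively. The estimate $\mathbb{E}(\max_L\mathcal{N}(\rho_L))\leq\sum_L\mathbb{E}(\mathcal{N}(\rho_L))$ would cost an extra factor of $N+1$ via Theorem \ref{thm:AreaLaw} and would give an $N^2$ rate, spoiling the target bound. Instead, I would invoke Lemma \ref{lem:LN:upper1}, which (as used in the proof of Theorem \ref{thm:AreaLaw}) provides a pointwise-in-disorder upper bound of the form
\begin{equation*}
\mathcal{N}(\rho_L)\leq (2L+1)\,G_{\Lambda,\Lambda_0},
\end{equation*}
where $G_{\Lambda,\Lambda_0}$ is built from ground-state correlation matrix elements (hence independent of $L$), and satisfies $\mathbb{E}(G_{\Lambda,\Lambda_0})\leq\tilde{C}|\partial\Lambda_0|$ by exponential clustering (\ref{eq:ecl}). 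Since $2L+1\leq 2N+1$ for $L\leq N$, we get $\max_{0\leq L\leq N}\mathcal{N}(\rho_L)\leq(2N+1)G_{\Lambda,\Lambda_0}$ pointwise in the disorder, and taking expectations then concludes (\ref{eq:AreaLaw2}) with the very same constant $\tilde{C}$ as in Theorem \ref{thm:AreaLaw}.

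The main obstacle, and the reason this is a corollary rather than a triviality, is precisely the requirement in the second step that Lemma \ref{lem:LN:upper1} produces a bound which \emph{factorizes} into an explicit $(2L+1)$ prefactor times an $L$-independent random quantity. If the lemma were only to yield an expectation-level estimate for $\mathcal{N}(\rho_L)$, one would need a union bound over $L\in\{0,\ldots,N\}$ and would likely pay an extra logarithmic factor in $N+1$. Provided the factorized form is indeed what Lemma \ref{lem:LN:upper1} delivers (as the phrase ``variant version of the proof of Theorem \ref{thm:AreaLaw}'' in the introduction suggests), the proof is a short two-line estimate followed by a direct quotation of the ground-state exponential clustering.
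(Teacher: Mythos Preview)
Your proof is correct and slightly more streamlined than the paper's. The paper does not pass through the reduction $\max_{\omega}\mathcal{N}(\rho_{\omega,N})\leq\max_{0\leq L\leq N}\mathcal{N}(\rho_L)$; instead it re-enters the diagonalization of Theorem~\ref{thm:rhoT1:eigen}, writes $U^*\rho_{\omega,N}^{T_1}U=\sum_{L}\frac{\omega_L}{|\J_L|}\sum_{\alpha\in\J_L}\bigotimes_k\rho_{d_k}^{(\alpha_k)}$, and uses the monotonicity of $g_{d_k}$ from Theorem~\ref{thm:rho1} to obtain $\|\rho_{\omega,N}^{T_1}\|_1\leq\prod_k g_{d_k}(N)$, i.e.\ the very same upper bound (\ref{eq:TN:T1:upperbound}) as for $\rho_N$, after which Lemma~\ref{lem:LN:upper1} and the proof of Theorem~\ref{thm:AreaLaw} are quoted verbatim. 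Your route avoids revisiting the tensor-product structure: the triangle inequality for $\|\cdot\|_1$ together with $\sum_L\omega_L=1$ reduces everything to the pointwise, factorized bound of Lemma~\ref{lem:LN:upper1}, whose $(2L+1)$ prefactor you then maximize over $L\leq N$. The two arguments are logically equivalent (both ultimately rest on Lemma~\ref{lem:LN:upper1} being a deterministic estimate with an explicit $L$-dependent prefactor times an $L$-independent random quantity), but yours is shorter and makes the role of that factorization more transparent.
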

Note that the logarithmic negativity is not convex, meaning that Corollary \ref{cor:AreaLaw2} is not just a direct consequence of Theorem \ref{thm:AreaLaw}. We include the proof of Corollary \ref{cor:AreaLaw2} at the end of Section \ref{sec:LN:bound}.

\section{The Weyl operator expectations}\label{sec:Weyl}
 As was the case in all exact entanglement results of Gaussian states, e.g.,
\cite{VidalWerner,NSS2,Audenaert2002,EisertCramerPlenio2010}, our approach to the entanglement of the class $\{\rho_N,N\in\mathbb{N}\}$ of non-Gaussian states starts from the Weyl operator expectations, often called the \textit{(quantum) characteristic functions}. We first define the Weyl operators then prove Theorem \ref{thm:Weyl-Eigen}, that provides a formula for the characteristic function of arbitrary eigenstates of the harmonic oscillators Hamiltonian $H_\Lambda$. Theorem \ref{thm:rho1} provides a stepping stone to study the entanglement of the class of $N$-modes ensemble $\rho_N$.

For every $f:\Lambda\rightarrow \mathbb{C}$, the \textit{Weyl} (or \textit{displacement}) operator is defined as the unitary operator
\begin{equation}
\W(f)=\exp{\left(\frac{i}{\sqrt{2}}(a(f)+a^*(f))\right)}\text{ where } a(f)=\sum_{x\in\Lambda} \overline{f_x}a_x \quad a^*(f)=\sum_{x\in\Lambda} f_x a_x^*,
\end{equation}
where $a_x$ and $a_x^*$ are the annihilation and creation operators defined in (\ref{eq:def:a}). The CCR gives directly that
\begin{equation}\label{eq:Weyl:tensor}
\W(f)=\bigotimes_{x\in\Lambda}\W_{f_x},\quad \text{ where } \W_{f_x}:=\exp\left(\frac{i}{\sqrt{2}}(\overline{f_x}a_x+f_x a^*_x)\right).
\end{equation}
Let's identify $\ell^2(\Lambda;\mathbb{C})$ with $\ell^2(\Lambda;\mathbb{R})\oplus\ell^2(\Lambda;\mathbb{R})$, i.e.,
\begin{equation}\label{eq:def:tildef}
f\in\ell^2(\Lambda;\mathbb{C})\ \sim\ \tilde{f}=\begin{pmatrix}
            \RE f \\
            \IM f \\
          \end{pmatrix}\in
          \ell^2(\Lambda;\mathbb{R})\oplus\ell^2(\Lambda;\mathbb{R}).
\end{equation}
A state $\rho$ on $\mathcal{B}(\mathcal{H}_\Lambda)$ with vanishing first moments, i.e., $\langle q_x \rangle_\rho=\langle p_x\rangle_\rho=0$ for all $x\in\Lambda$ (like the eigenstates and thermal states of free boson systems), and with position-momentum correlation matrix $\Gamma_\rho=\langle R R^T
\rangle_{\rho}$ with
$R=
\begin{pmatrix}
  q \\
  p \\
\end{pmatrix}
$ is said to be \textit{Gaussian} (or \textit{quasi-free}) if it satisfies the bosonic Wick's theorem, and it follows that such Gaussian states are exactly those with gaussian  characteristic function,
\begin{equation}\label{eq:Weyl:QF}
\langle\W(f)\rangle_\rho:=\Tr[\W(f)\rho]=e^{-\frac{1}{2}\langle\tilde{f},\Gamma_\rho\tilde{f}\rangle},
\end{equation}
for all $f\in\ell^2(\Lambda)$.
For a definition of general Gaussian states on the CCR algebra we refer the reader to e.g., \cite{BRvol2,Man68,Araki71,NSS2}.

\subsection{The characteristic function of eigenstates}
Theorem \ref{thm:Weyl-Eigen} gives a formula for the characteristic function of any eigenstate $\rho_\alpha=|\psi_\alpha\rangle\langle\psi_\alpha|$ of $H_\Lambda$ in terms of the ground state correlation matrix $\Gamma_0$. To state the theorem, we first introduce the $2|\Lambda|\times 2|\Lambda|$ matrix $M$
\begin{equation}\label{eq:def:M}
M=\begin{pmatrix}
    h_\Lambda^{-\frac{1}{2}} & 0 \\
    0 & h_\Lambda^{\frac{1}{2}} \\
  \end{pmatrix}
\end{equation}
where $h_\Lambda=\mathcal{O}\gamma^2\mathcal{O}^T$ is defined in (\ref{eq:def:h}). For each $k=1,\ldots,|\Lambda|$, let $\chi_k(M)$ be the orthogonal  projection of $M$ onto a subspace spanned by the eigenvectors associated with the eigenvalues $\gamma_k^{-1}$ and $\gamma_k$. i.e., $\chi_k(M)$ is defined as
\begin{equation}\label{eq:chikM}
\chi_k(M)=\begin{pmatrix}
            \mathcal{O} & 0 \\
            0 & \mathcal{O} \\
          \end{pmatrix}\chi_{k}\begin{pmatrix}
            \mathcal{O}^T & 0 \\
            0 & \mathcal{O}^T \\
          \end{pmatrix},
\end{equation}
where $\chi_k$ is the restriction operator onto the $k$-th site,
\begin{equation}
\chi_k:=\begin{pmatrix}
          \idty_{\{k\}} & 0 \\
          0 & \idty_{\{k\}}  \\
        \end{pmatrix}, \text{ and }\idty_{\{k\}}:=e_ke_k^T.
\end{equation}

We remark that $M=2\Gamma_0+iJ$ where $J=\begin{pmatrix}
                                                  0 & -\idty \\
                                                  \idty & 0 \\
                                                \end{pmatrix}
$ and $\Gamma_0$ is the static position-momentum correlation matrix at the ground state, see e.g., \cite{ARSS17}.

Now we are ready to state the theorem.
\begin{thm}\label{thm:Weyl-Eigen} Let $\alpha=(\alpha_1,\ldots,\alpha_{|\Lambda|}) \in \mathbb{N}_0^{| \Lambda|}$ be the vector of occupation modes and let $\rho_{\alpha}=|\psi_\alpha\rangle\langle\psi_\alpha|$ be the corresponding eigenstate of $H_\Lambda$. Then for any $f: \Lambda \to \mathbb{C}$, the characteristic function of $\rho_\alpha$ is given by the formula
\begin{equation} \label{weyl_corM}
\langle \W(f) \rangle_{\rho_\alpha} =e^{-\frac{1}{4}\langle\tilde{f},M\tilde{f}\rangle} \prod_{k=1}^{| \Lambda|} L_{\alpha_k} \left(\frac{\langle\tilde{f},M\chi_k(M)\tilde{f}\rangle}{2} \right).
\end{equation}
Here $\tilde{f}$ is as in (\ref{eq:def:tildef}), $M$ is given in (\ref{eq:def:M}), $\chi_k(M)$ is its spectral projection given in (\ref{eq:chikM}), and $L_{\alpha_k}(\cdot)$ is the Laguerre polynomial of degree $\alpha_k$, defined as
\begin{equation}\label{eq:def:Laguerre}
L_{k}(x)=\sum_{n=0}^{k}\frac{(-1)^n x^n}{n!}\dbinom{k}{n}, \text{ for }k=1,2,\ldots.
\end{equation}
\end{thm}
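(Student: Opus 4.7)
The plan is to pass to the $b$-mode basis defined by the Bogoliubov transformation (\ref{eq:a-to-b}). In that basis the Hamiltonian decouples into uncoupled oscillators as in (\ref{eq:def:H-bosons}), the eigenstate (\ref{eq:def:e-fun-bosons}) is a product Fock vector $\psi_\alpha=\bigotimes_k|\alpha_k\rangle$, and the displacement $\W(f)$ splits into a tensor product of single-mode displacement operators. The trace in (\ref{weyl_corM}) then factorizes, and each one-mode matrix element is given by the classical identity $\langle n|D(\beta)|n\rangle=e^{-|\beta|^2/2}L_n(|\beta|^2)$. What remains is a matrix-algebra step identifying the one-mode arguments $|\beta_k|^2$ with $\frac{1}{2}\langle\tilde f,M\chi_k(M)\tilde f\rangle$ and using $\sum_k\chi_k(M)=\idty$ to collapse the product of one-mode Gaussians into $e^{-\frac{1}{4}\langle\tilde f,M\tilde f\rangle}$.

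For the rewriting step, I would first use $a_x=(q_x+ip_x)/\sqrt 2$ to obtain $\frac{i}{\sqrt 2}(a(f)+a^*(f))=i\langle\tilde f,R\rangle$ with $R=(q,p)^T$, and then invert (\ref{eq:a-to-b}) to get $q=\mathcal O\gamma^{-1/2}Q$ and $p=\mathcal O\gamma^{1/2}P$, where $Q_k=(b_k+b_k^*)/\sqrt 2$ and $P_k=-i(b_k-b_k^*)/\sqrt 2$ are the canonical operators of the $k$-th $b$-mode. Setting $g^{(R)}:=\gamma^{-1/2}\mathcal O^T\RE f$ and $g^{(I)}:=\gamma^{1/2}\mathcal O^T\IM f$, substitution produces
\[
\W(f)=\exp\Bigl(i\sum_{k=1}^{|\Lambda|}\bigl(g^{(R)}_k Q_k+g^{(I)}_k P_k\bigr)\Bigr)=\bigotimes_{k=1}^{|\Lambda|}D_k(\beta_k),\qquad |\beta_k|^2=\tfrac12\bigl((g^{(R)}_k)^2+(g^{(I)}_k)^2\bigr),
\]
so that taking expectations in the product number state and invoking the single-mode identity (proved, e.g., by normal-ordering $D(\beta)$ via Baker--Campbell--Hausdorff and summing the resulting series with the help of (\ref{eq:def:Laguerre})) gives
\[
\langle\W(f)\rangle_{\rho_\alpha}=\prod_{k=1}^{|\Lambda|}e^{-|\beta_k|^2/2}L_{\alpha_k}(|\beta_k|^2).
\]

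To close the argument I would compute, using $h_\Lambda^{\pm 1/2}=\mathcal O\gamma^{\pm 1}\mathcal O^T$ together with the definition (\ref{eq:chikM}),
\[
M\chi_k(M)=\begin{pmatrix}\gamma_k^{-1}\mathcal O e_k e_k^T\mathcal O^T & 0\\ 0 & \gamma_k\mathcal O e_k e_k^T\mathcal O^T\end{pmatrix},
\]
which yields $\langle\tilde f,M\chi_k(M)\tilde f\rangle=(g^{(R)}_k)^2+(g^{(I)}_k)^2=2|\beta_k|^2$, matching the Laguerre argument in (\ref{weyl_corM}); summing $\chi_k(M)$ over $k$ gives the identity on $\ell^2(\Lambda;\R)\oplus\ell^2(\Lambda;\R)$, which consolidates the product of one-mode Gaussian factors $e^{-|\beta_k|^2/2}$ into the single $e^{-\frac14\langle\tilde f,M\tilde f\rangle}$ stated in the theorem. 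The main obstacle is this last piece of bookkeeping: keeping straight which side of the Bogoliubov transformation carries $\gamma^{+1/2}$ versus $\gamma^{-1/2}$, so that the asymmetric structure of $M=\diag(h_\Lambda^{-1/2},h_\Lambda^{1/2})$ emerges naturally from the split of $f$ into its real and imaginary parts. Once (\ref{eq:a-to-b}) is inverted explicitly, the rest is direct substitution.
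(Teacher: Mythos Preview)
Your proposal is correct and follows essentially the same route as the paper: pass to the $b$-modes, factor the Weyl operator into single-mode displacements, use the single-mode identity $\langle n|D(\beta)|n\rangle=e^{-|\beta|^2/2}L_n(|\beta|^2)$, and then do the matrix bookkeeping to identify the arguments with $\langle\tilde f,M\chi_k(M)\tilde f\rangle$. The only cosmetic difference is that the paper first rewrites (\ref{weyl_corM}) in the equivalent form $e^{-\frac14\|Vf\|^2}\prod_k L_{\alpha_k}(|(Vf)_k|^2/2)$ with $Vf=\gamma^{-1/2}\mathcal O^T\RE f+i\gamma^{1/2}\mathcal O^T\IM f$ (your $g^{(R)}+ig^{(I)}$) and then derives the one-mode Laguerre identity explicitly via BCH and the action of $b_j^k$ on $\psi_\alpha$, rather than citing it; your organization and theirs are otherwise the same argument.
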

In the ground state $\rho_0$, the statement (\ref{weyl_corM}) reduces to the well known formula
\begin{equation}
\langle\W(f)\rangle_{\rho_0}=e^{-\frac{1}{4}\langle \tilde{f},M\tilde{f}\rangle},
\end{equation}
see e.g., \cite{NSS1} which originates from Proposition 5.2.28 of \cite{BRvol2} or Chapter XII.12 of \cite{Messiah}.
In comparison with the characteristic function of Gaussian states in (\ref{eq:Weyl:QF}), and using that $M=2\Gamma_{0}+iJ$, with the fact $\langle \tilde{f},J\tilde{f}\rangle=0$,  we observe that the ground state of $H_\Lambda$ is the only Gaussian eigenstate.
\begin{proof}
We first show that the characteristic function formula (\ref{weyl_corM}) is equivalent to
\begin{equation}\label{eq:Weyl:alpha}
\langle \W(f)\rangle_{\rho_\alpha}=e^{-\frac{1}{4}\|Vf\|^2}\prod_{k=1}^{|\Lambda|}L_{\alpha_k}\left(\frac{|(Vf)_k|^2}{2}\right),
\text{ where }Vf=\gamma^{-\frac{1}{2}}\mathcal{O}^T \RE[f]+i\gamma^{\frac{1}{2}}\mathcal{O}^T \IM[f].
\end{equation}
Formula (\ref{eq:Weyl:alpha}) follows from the following argument: first, define $\tilde{V}$ as
\begin{equation}
\tilde{V}=\begin{pmatrix}
    \gamma^{-\frac{1}{2}}\mathcal{O}^T & 0 \\
   0 &  \gamma^{\frac{1}{2}}\mathcal{O}^T\\
  \end{pmatrix}
\end{equation}
and note that $\tilde{V}^T\tilde{V}=M$.
Then it is straight forward to check the following two statement
\begin{eqnarray}
\|Vf\|^2&=&\langle Vf,Vf\rangle_{\ell^2(\Lambda;\mathbb{C})}=\langle \tilde{V}\tilde{f},\tilde{V}\tilde{f}\rangle_{\ell^2(\Lambda;\mathbb{R})\oplus\ell^2(\Lambda;\mathbb{R})}
=\langle\tilde{f},M\tilde{f}\rangle.\\
|(Vf)_k|^2&=&\|\chi_k\tilde{V}\tilde{f}\|^2_{\ell^2(\Lambda;\mathbb{R})\oplus\ell^2(\Lambda;\mathbb{R})}
=\langle\tilde{f},\tilde{V}^T\chi_k\tilde{V}\tilde{f}\rangle
=\langle\tilde{f},M\chi_k(M)\tilde{f}\rangle.
\end{eqnarray}
So we proceed by proving formula (\ref{eq:Weyl:alpha}).
One can see from (\ref{eq:a-to-b}), see also \cite{NSS1} for more details, that the Weyl operator can be written in terms of the creation and annihilation operators $b_j^*$ and $b_j$, defined by the Bogoliubov transformation (\ref{eq:a-to-b}), as follows
\begin{equation}\label{eq:Weyl:VB}
\W(f)=\exp\left(\frac{i}{\sqrt{2}}(b(Vf)+b^*(Vf))\right), \text{ where }b(f)=\sum_{j=1}^{|\Lambda|}\overline{f_j}b_j
\end{equation}
and $V$ is given in (\ref{eq:Weyl:alpha}). It is well known that each eigenvector $\psi_\alpha$, with $\alpha\in\ell^\infty(\mathbb{N}_0^{|\Lambda|})$ is in the analytic vector for $b(f)+b^*(f)$, see e.g., Section 5.2.1.2 from \cite{BRvol2}. Thus one can define $\mathcal{W}(V^{-1}f)\psi_\alpha$ by power series expansion, see e.g., Theorem 8.30 in \cite{Weidmann},
\begin{equation}
\mathcal{W}(V^{-1}f)\psi_\alpha=\sum_{m\geq 0}\frac{(2^{-1/2}i)^m}{m!}(b(f)+b^*(f))^m\psi_\alpha.
\end{equation}
The Baker-Campbell-Hausdorff formula gives, using $[b(f),b^*(f)]=\|f\|^2$ and the CCR,
\begin{equation}
\langle\psi_\alpha|\W(V^{-1}f)|\psi_\alpha\rangle=e^{-\frac{1}{4}\|f\|^2} \left\langle\prod_{j=1}^{|\Lambda|}e^{\frac{-i}{\sqrt{2}}\overline{f_j}b_j}\psi_\alpha,
\prod_{j=1}^{|\Lambda|}
e^{\frac{i}{\sqrt{2}}\overline{f_j}b_j}\psi_\alpha\right\rangle
\end{equation}
where the exponential operators are defined by their power series expansions.
A direct calculation using the CCR of the $b_j$ operators gives
\begin{equation}
(b_j)^k\ \psi_\alpha=\left\{
                     \begin{array}{ll}
                       0, & \hbox{if $k>\alpha_j$;} \\
                       \sqrt{\frac{\alpha_j!}{(\alpha_j-k)!}}\ \psi_{\alpha-k e_j^T}, & \hbox{if $k\leq \alpha_j$}
                     \end{array}
                   \right.
\end{equation}
where $\{e_j\}_j$ are the canonical basis of $\ell^2(\mathbb{N}^{|\Lambda|})$. Meaning that,
\begin{equation}\label{eq:expB-taylor}
e^{\frac{i}{\sqrt{2}}\overline{f_j}b_j}\psi_\alpha=\sum_{k=0}^{\alpha_j}\frac{(2^{-1/2}i)^k\overline{f_j}^{k}}{k!}
\sqrt{\frac{\alpha_j!}{(\alpha_j-k)!}}\psi_{\alpha-ke^T_j}.
\end{equation}
By taking the expectations of $\mathcal{W}(V^{-1}f)$ at any eigenstate $\psi_\alpha$ using the power series expansions (\ref{eq:expB-taylor}) for $j=1,\ldots,|\Lambda|$,
\begin{eqnarray}\label{eq:Weyl:alpha:calc1}
\langle \W(V^{-1}f)\rangle_{\rho_\alpha}&=&
e^{-\frac{1}{4}\|f\|^2}\sum_{k_1,\tilde{k}_1=0}^{\alpha_1}\ldots
\sum_{k_{|\Lambda|},\tilde{k}_{|\Lambda|}=0}^{\alpha_{|\Lambda|}}
\prod_{j=1}^{|\Lambda|}\left(\frac{(2^{-1/2}i)^{k_j+\tilde{k}_j}f_j^{k_j}
\overline{f_j}^{\tilde{k}_j}}{k_j!\tilde{k}_j!}\right.
\times \nonumber\\
&&
\hspace{1.5cm}\left.\times\frac{\alpha_j!}{\sqrt{(\alpha_j-k_j)!(\alpha_j-\tilde{k}_j)!}}\right)\left\langle
\psi_{(\alpha_1-k_1,\ldots,\alpha_{|\Lambda|}-k_{|\Lambda|})},
\psi_{(\alpha_1-\tilde{k}_1,\ldots,\alpha_{|\Lambda|}-\tilde{k}_{|\Lambda|})}\right\rangle \nonumber\\
&=&e^{-\frac{1}{4}\|f\|^2}\prod_{j=1}^{|\Lambda|}\sum_{k_j=0}^{\alpha_j}
\frac{\left(-\frac{1}{2}|f_j|^2\right)^{k_j}}{k_j!}\binom{\alpha_j}{k_j}
\end{eqnarray}
where we used in the last step that $\{\psi_\alpha\}_{\alpha\in\mathbb{N}_0^{|\Lambda|}}$ are orthonormal and thus
\begin{equation}
\left\langle
\psi_{(\alpha_1-k_1,\ldots,\alpha_{|\Lambda|}-k_{|\Lambda|})},
\psi_{(\alpha_1-\tilde{k}_1,\ldots,\alpha_{|\Lambda|}-\tilde{k}_{|\Lambda|})}\right\rangle=
\prod_{j=1}^{|\Lambda|}\delta_{k_j,\tilde{k}_j},
\end{equation}
which completes the proof.
\end{proof}
Let us consider the orthonormal basis \begin{equation}\label{eq:def:ONB1}
\{|n\rangle;\ n\in\mathbb{N}_0\} \text{ of } \mathcal{L}^2(\R),
\end{equation}
 consisting of the eigenfunctions of the standard harmonic oscillator Hamiltonian $a^*a+\frac{1}{2}\idty$
written in terms of the creation and annihilation operators
\begin{equation}
a^*|n\rangle=\sqrt{n+1}|n+1\rangle, \text{ for } n\geq 0, \text{ and  }a|n\rangle=\sqrt{n}|n-1\rangle, \text{ for }n\geq 1.
\end{equation}
It is easy to see from Theorem \ref{thm:Weyl-Eigen} that the expectation of Weyl operator defined on $\mathcal{L}^2(\mathbb{R})$ generated by $z\in\mathbb{C}$ at the eigenstate $|n\rangle \langle n|$ associated with $n$ particles, is given by
\begin{equation}\label{eq:weyl:rho-n}
\langle \W_z\rangle_{n}:=\langle n|\W_z|n\rangle= L_n\left(\frac{|z|^2}{2}\right)e^{-\frac{1}{4}|z|^2},
\end{equation}
which is a simplified version of the well known result
\begin{equation}
\langle\W_z\rangle_{n}=e^{\frac{|z|^2}{4}}\sum_{m\geq 0}\frac{(-|z|^2)^m}{2^m(m!)^2}\frac{(n+m)!}{n!},
\end{equation}
 see e.g., XII.55 from \cite{Messiah}.

\subsection{Scaling the characteristic functions}
We will need later to deal with a scaled version of the characteristic functions of the eigenstates $\rho_n=|n\rangle\langle n|$ of the form $\langle\W_{\sqrt{a} z}\rangle_{n}$ for any positive integer $a$. The following theorem proves the existence of a unique operator $\rho_a^{(n)}$, such that $\langle\W_{\sqrt{a} z}\rangle_{n}=\langle\W_z\rangle_{\rho^{(n)}_a}$. In fact, the theorem defines this operator explicitly.
\begin{thm}\label{thm:rho1}
For any $a>0$ and $\ell\in\mathbb{N}_0$, there is a unique self-adjoint trace class operator $\rho^{(\ell)}_a$ on $\mathcal{L}^2(\R)$, for which
\begin{equation}\label{eq:Weyl-rhod-ell}
\langle \W_z\rangle_{\rho_a^{(\ell)}}=\langle\W_{\sqrt{a} z}\rangle_{\ell}=L_{\ell}\left(\frac{a|z|^2}{2}\right)e^{-\frac{1}{4}a|z|^2}, \quad \text{for all\ \ }z\in \C,
\end{equation}
where $L_\ell(\cdot)$ is the Laguerre polynomial of degree $\ell$. The operator $\rho_{a}^{(\ell)}$ is defined in terms of the orthonormal eigenvectors $\{|n\rangle, n\in\mathbb{N}_0\}$, given in (\ref{eq:def:ONB1}), as follows: for any  $n\in\mathbb{N}_0$
\begin{equation}\label{eq:rhod1}
 \rho_a^{(\ell)}\ |n\rangle=\sum_{j=0}^\ell \sigma_{j,\ell}(\zeta_a) \omega_{n,j,\ell}(\zeta_a)\ |n\rangle,\quad \zeta_a:=\frac{a-1}{a+1},
\end{equation}
where, for $x\in\mathbb{R}$,
$\sigma_{j,\ell}(x):=\binom{\ell}{j}(-x)^j(1+x)^{\ell-j}$, and
\begin{equation}\label{eq:def:sigma-omega}
\omega_{n,j,\ell}(x):=
\left\{
  \begin{array}{ll}
    \binom{n}{\ell-j}x^{n-(\ell-j)}(1-x)^{\ell-j+1} & \hbox{if }n\geq (\ell-j) \\
    0 & \hbox{otherwise.}
  \end{array}
\right.
\end{equation}
Moreover, we have the following bound for the trace norm of $\rho_a^{(\ell)}$
\begin{equation}\label{eq:def:f}
\|\rho_{a}^{(\ell)}\|_1\leq g_a(\ell):=\left\{
                                                              \begin{array}{ll}
                                                                a^\ell, & \hbox{if $a\geq 1$} \\
                                                                (1/a)^{\ell+1} & \hbox{if $a<1$.}
                                                              \end{array}
                                                            \right.
\end{equation}
\end{thm}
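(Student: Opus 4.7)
\emph{Plan.} My plan is to establish uniqueness from injectivity of the Weyl characteristic function, verify that the diagonal ansatz (\ref{eq:rhod1}) reproduces the required characteristic function by a generating-function calculation, and then estimate the trace norm by brute summation against $|\sigma_{j,\ell}|$ and $|\omega_{n,j,\ell}|$.

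\emph{Uniqueness and setup.} First I would note that the map $\rho \mapsto (z \mapsto \Tr[\W_z\rho])$ is injective on the trace class, as the Fourier-Weyl transform is invertible (see, e.g., Proposition 5.2.28 of \cite{BRvol2}). Then I would \emph{define} $\rho_a^{(\ell)}$ to be the operator that acts diagonally in the orthonormal basis $\{|n\rangle\}$ with eigenvalues $c_n := \sum_{j=0}^\ell \sigma_{j,\ell}(\zeta_a)\,\omega_{n,j,\ell}(\zeta_a)$ as prescribed by (\ref{eq:rhod1}); since $\zeta_a \in \R$ these eigenvalues are real, so the operator is formally self-adjoint, and trace-class membership will follow from the norm bound (\ref{eq:def:f}).

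\emph{Characteristic function.} Using (\ref{eq:weyl:rho-n}) I would write $\Tr[\W_z\rho_a^{(\ell)}] = e^{-|z|^2/4}\sum_{n\ge 0} c_n L_n(|z|^2/2)$, set $t=|z|^2/2$ and $x=\zeta_a$, and interchange the order of summation. For fixed $j$ with $m := \ell-j$, the inner task is to evaluate
\[
\sum_{n\ge m}\binom{n}{m}x^{n-m}(1-x)^{m+1}L_n(t).
\]
I would obtain this from the identity
\[
\sum_{k\ge 0}\binom{k+m}{m}L_{k+m}(t)\,s^k \;=\; \frac{L_m(t/(1-s))}{(1-s)^{m+1}}\,e^{-st/(1-s)},
\]
derived by differentiating the Laguerre generating function $G(s,t)=(1-s)^{-1}e^{-st/(1-s)}$ exactly $m$ times in $s$: writing $u=(1-s)^{-1}$ and $G = e^t u\,e^{-tu}$, a straightforward induction gives $\partial_s^m G = e^t u^{m+1} Q_m(tu) e^{-tu}$ with $Q_m(y)$ satisfying $Q_{m+1}(y)=(m+1)Q_m(y)+yQ_m'(y)-yQ_m(y)$, $Q_0=1$, and one identifies $Q_m = m! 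L_m$ by matching initial data. Feeding this back yields $\sum_n \omega_{n,j,\ell}(x)L_n(t) = L_{\ell-j}(t/(1-x))e^{-xt/(1-x)}$, whence
\[
\sum_n c_n L_n(t) \;=\; e^{-xt/(1-x)}\sum_{k=0}^\ell \binom{\ell}{k}(1+x)^{k}(-x)^{\ell-k} L_{k}\!\left(\tfrac{t}{1-x}\right)
\]
after reindexing $k = \ell-j$. To collapse this sum I would invoke the elementary scaling identity
\[
L_\ell(cAy) \;=\; \sum_{k=0}^\ell \binom{\ell}{k} A^k B^{\ell-k} L_k(cy) \qquad (A+B=1),
\]
verified in a line by expanding both sides in powers of $y$ and matching coefficients. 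Specialising $A = 1+x$, $B = -x$, $c = (1-x)^{-1}$, the equalities $(1+x)/(1-x)=a$ and $xt/(1-x)=(a-1)t/2$ turn the sum into $L_\ell(at)e^{(1-a)t/2}$; multiplying by the overall prefactor $e^{-t/2}$ gives $L_\ell(a|z|^2/2)e^{-a|z|^2/4}$, as required, and uniqueness pins down $\rho_a^{(\ell)}$.

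\emph{Trace-norm bound and main obstacle.} For (\ref{eq:def:f}) I would estimate $\|\rho_a^{(\ell)}\|_1=\sum_n |c_n|\le \sum_{j=0}^\ell |\sigma_{j,\ell}(\zeta_a)|\sum_n |\omega_{n,j,\ell}(\zeta_a)|$. The inner sum equals $(|1-\zeta_a|/(1-|\zeta_a|))^{\ell-j+1}$ by $\sum_{k\ge 0}\binom{k+m}{m}y^k = (1-y)^{-m-1}$ at $y=|\zeta_a|$. When $a\ge 1$ one has $\zeta_a\ge 0$, so the inner sum is $1$ and the binomial theorem collapses the outer sum to $(1+2\zeta_a)^\ell = ((3a-1)/(a+1))^\ell$; the elementary inequality $(a-1)^2\ge 0$ is equivalent to $(3a-1)/(a+1)\le a$, yielding $\|\rho_a^{(\ell)}\|_1\le a^\ell$. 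When $0<a<1$ the same calculation produces $(1/a)((3-a)/(1+a))^\ell$, and $(a-1)^2\ge 0$ again gives the bound $(1/a)^{\ell+1}$. The only genuinely nontrivial step in the whole proof is the derivative identity for $G(s,t)$ and the recognition of its polynomial prefactor as a Laguerre polynomial; it is this coincidence that makes the ansatz (\ref{eq:rhod1}) compatible with the product structure of the Laguerre terms appearing in (\ref{weyl_corM}), while the remainder of the argument reduces to binomial bookkeeping.
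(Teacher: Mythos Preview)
Your proof is correct and follows essentially the same route as the paper: the trace-norm bound is obtained by the identical triangle-inequality-plus-geometric-series computation leading to $(1+2\zeta_a)^\ell$ or $\frac{1-\zeta_a}{1+\zeta_a}(1-2\zeta_a)^\ell$, and the characteristic-function verification rests on the same two Laguerre identities (the inner sum over $n$ and the scaling relation collapsing the sum over $j$). The only cosmetic differences are that the paper cites the inner identity as Erd\'elyi's first multiplication theorem \cite{Truesdell} rather than deriving it via $\partial_s^m$ of the generating function, and invokes Lemma~3.1 of \cite{NSS2} for uniqueness rather than Proposition~5.2.28 of \cite{BRvol2} (which is not quite the right pointer for injectivity of the Weyl transform).
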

For every $a>0$, the number $\zeta_a$ satisfies $-1<\zeta_a<1$; and let's note that the function $g_a$ defined in (\ref{eq:def:f}) is an increasing function. For fixed $\ell$, $\zeta_a$, and $n$, $\{\sigma_{j,\ell}(\zeta_a),j=1,\ldots,\ell\}$ represent the weights of $\omega_{n,j,\ell}(\zeta_a)$ where $\sum_j\sigma_{j,\ell}(\zeta_a)=1$. Moreover, the identity
\begin{equation}\label{eq:eta-n}
\sum_{n=k}^\infty\binom{n}{k}x^{n-k}=
\frac{1}{(1-x)^{k+1}}, \quad -1<x<1,
\end{equation}
gives that $\sum_{n\geq 0}\omega_{n,j,\ell}(\zeta_a)=1$, meaning that $\Tr\rho_a^{(\ell)}=1$. We stress here that $\rho_a^{(\ell)}$ is not necessarily a state as some of its eigenvalues are negatives for certain values of $a$, $\zeta_a$, and  $\ell$. The cases $\ell\in\{0,1\}$ are presented after the proof.
\begin{proof} (of Theorem \ref{thm:rho1})
By construction, $\rho_a^{(\ell)}$ is self-adjoint and it is clear that it is trace class. To show the inequality (\ref{eq:def:f}), we start from (\ref{eq:rhod1}), then a calculation using the identity (\ref{eq:eta-n})
gives
\begin{equation}\label{eq:Tr1-rhod-ell-abs}
\|\rho_a^{(\ell)}\|_1\leq\sum_{j=0}^{\ell}
\binom{\ell}{j}|\zeta_a|^j(1+\zeta_a)^{\ell-j}
\left(\frac{1-\zeta_a}{1-|\zeta_a|}\right)^{\ell-j+1}.
\end{equation}
This yields
\begin{equation}
\|\rho_a^{(\ell)}\|_1\leq\left\{
                           \begin{array}{ll}
                             (1+2\zeta_a)^\ell, & \hbox{if } \zeta_a\geq 0\\
                             \frac{1-\zeta_a}{1+\zeta_a}\ (1-2\zeta_a)^\ell, & \hbox{if } \zeta_a<0.
                           \end{array}
                         \right.
\end{equation}
Using $\zeta_a=\frac{a-1}{a+1}$ and noting that $\zeta_a\geq 0$ if and only if $a\geq 1$, we get the desired bound (\ref{eq:def:f}).

Next, we prove (\ref{eq:Weyl-rhod-ell}). We expand the trace in $\langle \W_z\rangle_{\rho^{(\ell)}_a}$  over the orthonormal basis $\{|n\rangle,\ n\in\mathbb{N}_0\}$ given in (\ref{eq:def:ONB1})
to get, using (\ref{eq:rhod1}) and (\ref{eq:def:sigma-omega}),
\begin{equation}\label{eq:Weyl:rhodl}
\langle \W_z\rangle_{\rho^{(\ell)}_a}=\sum_{n=0}^{\infty}\langle n|\W_z\rho^{(\ell)}_a|n\rangle
=\sum_{j=0}^{\ell}\sigma_{j,\ell}(\zeta_a)
\sum_{n=\ell-j}^\infty \omega_{n,j,\ell}(\zeta_a)\langle n|\W_z|n\rangle.
\end{equation}
The \textit{first  multiplication theorem} of Erd\'{e}lyi,  see e.g., \cite{Truesdell}, says that for any $x\in\R$ and $k\geq 0$,
\begin{equation}
\sum_{n=k}^\infty \binom{n}{k}\zeta_a^{n-k}L_n(x)=
\frac{e^{-\frac{\zeta_a}{1-\zeta_a}x}}{(1-\zeta_a)^{k+1}}
L_k\left(\frac{x}{1-\zeta_a}\right),
\end{equation}
which gives directly that
\begin{equation}\label{eq:weyl:Sum:wWeyl}
\sum_{n=\ell-j}^\infty \omega_{n,j,\ell}(\zeta_a)\langle n|\W_z|n\rangle=e^{-\frac{1}{4}\left(\frac{1+\zeta_a}{1-\zeta_a}\right)|z|^2}
L_{\ell-j}\left(\frac{1}{1-\zeta_a}\frac{|z|^2}{2}\right).
\end{equation}
Moreover, one can see that for all $x\in\R$
\begin{equation}\label{eq:weyl:sigma}
\sum_{j=0}^\ell\sigma_{j,\ell}(\zeta_a)L_{\ell-j}(x)=L_\ell((1+\zeta_a)x),
\end{equation}
which follows by expanding the $(\ell-j)$-th degree Laguerre polynomial $L_{\ell-j}(x)$, as in (\ref{eq:def:Laguerre}), then the change of order of the two summations yields directly the right hand side.

Substitute (\ref{eq:weyl:Sum:wWeyl}) in (\ref{eq:Weyl:rhodl}) then apply (\ref{eq:weyl:sigma}) with $x=|z|^2/2$,
\begin{equation}
\langle \W_z\rangle_{\rho^{(\ell)}_a}=
e^{-\frac{1}{4}\left(\frac{1+\zeta_a}{1-\zeta_a}\right)|z|^2}
L_{\ell}\left(\frac{1+\zeta_a}{1-\zeta_a}\frac{|z|^2}{2}\right)=
e^{-\frac{1}{4}a|z|^2}
L_{\ell}\left(\frac{a|z|^2}{2}\right)
\end{equation}
The uniqueness follows from Lemma 3.1 in \cite{NSS2}.
\end{proof}
When $a=1$, the operator $\rho_1^{(\ell)}$, defined in (\ref{eq:rhod1}), is the eigenstate $|\ell\rangle\langle\ell|$ of the standard harmonic oscillator as in (\ref{eq:weyl:rho-n}).  Moreover, one can see, from (\ref{eq:Tr1-rhod-ell-abs}), that the equality in (\ref{eq:def:f}) is attained if and only if $\ell=0$. This makes Lemma 3.5 in \cite{NSS2} a special case of Theorem \ref{thm:rho1}, where
the operator $\rho_a^{(0)}$ is defined as
\begin{equation}\label{eq:rhod:0}
\rho_a^{(0)}|n\rangle = \lambda_{n,a}^{(0)}\, |n\rangle, \quad\text{where\, }\lambda_{n,a}^{(0)}:=(1-\zeta_a)(\zeta_a)^n, \, n\in\mathbb{N}_0.
\end{equation}
Seeing that the eigenvalues in (\ref{eq:rhod1}) look somehow  cumbersome,  we present the first nontrivial case $\ell=1$, (\ref{eq:rhod1}) reduces to
$\rho_a^{(1)}|n\rangle = \lambda_{n,a}^{(1)}\, |n\rangle$, where
\begin{equation}\label{eq:sgn:1}
\lambda_{n,a}^{(1)}:=
\left\{
  \begin{array}{ll}
    -\zeta_a(1-\zeta_a) & n=0 \\
    n(1+\zeta_a)(1-\zeta_a)^2\zeta_a^{n-1}-(1-\zeta_a)\zeta_a^{n+1} & n=1,2,\ldots
  \end{array}
\right.
\end{equation}
and it is straight forward to check that for even $n$,
\begin{equation}
\sgn(\lambda_{n,a}^{(1)})=\left\{
                            \begin{array}{ll}
                              \sgn(\zeta_a) & n>\frac{\zeta_a^2}{1-\zeta_a^2} \\
                              -\sgn(\zeta_a) & 0\leq n<\frac{\zeta_a^2}{1-\zeta_a^2}
                            \end{array}
                          \right.
\end{equation}
where $\sgn(\cdot)$ is the sign function.
\section{Diagonalization of the partial transpose}\label{sec:PT:diag}
In this section we diagonalize  the partial transpose of $\rho_N$ given in (\ref{eq:def:rhoN}). This is summarized in the following Theorem.
\begin{thm}\label{thm:rhoT1:eigen}
Fix $\Lambda_0\subset\Lambda$. There exists a unitary $U\in\mathcal{B}(\mathcal{H}_\Lambda)$ such that, for any $N\in\mathbb{N}_0$, and the corresponding set $\J_N$ given in (\ref{eq:def:Jn}), the partial transpose of the $N$-modes ensemble $\rho_N$ with respect to the decomposition (\ref{eq:decomposition:H}) is given as
\begin{equation}\label{eq:rhoT1-diag}
\rho_N^{T_1}=\frac{1}{|\J_N|}\sum_{\alpha\in \J_N}U\left(\bigotimes_{k=1}^{|\Lambda|}\rho^{(\alpha_k)}_{d_k}\right)U^*
\end{equation}
where for each $\alpha=(\alpha_1,\ldots,\alpha_{|\Lambda|})\in\J_N$, the operators $\{\rho_{d_k}^{(\alpha_k)}, k=1,\ldots,|\Lambda|\}$ are as in Theorem \ref{thm:rho1}, with $\{d_k>0, k=1,\ldots,|\Lambda|\}$ being the symplectic eigenvalues of
\begin{equation}\label{eq:tildeM}
\tilde{M}:=\begin{pmatrix}
             \idty & 0 \\
             0 & \mathbb{P} \\
           \end{pmatrix}M\begin{pmatrix}
             \idty & 0 \\
             0 & \mathbb{P} \\
           \end{pmatrix}
\end{equation} and $\mathbb{P}$ is the diagonal operator with diagonal entries \begin{equation}\label{eq:P}
\mathbb{P}_{xx}=\left\{
                  \begin{array}{ll}
                    -1 & \hbox{if $x\in\Lambda_0$} \\
                    1 & \hbox{otherwise.}
                  \end{array}
                \right.
\end{equation}
\end{thm}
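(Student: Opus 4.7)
\medskip
\noindent\textbf{Proof proposal.}
The plan is to establish (\ref{eq:rhoT1-diag}) at the level of characteristic functions $f\mapsto\langle\W(f)\rangle_\rho$ and then invoke the uniqueness of an operator determined by its Weyl-operator expectations (Lemma 3.1 of \cite{NSS2}, as already used in the proof of Theorem \ref{thm:rho1}). One side of (\ref{eq:rhoT1-diag}) is produced from Theorem \ref{thm:Weyl-Eigen} and partial transposition, the other from Williamson's theorem and Theorem \ref{thm:rho1}, and the two are matched via Lemma \ref{lem:I-N}.

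First, I derive the partial-transpose rule for Weyl operators. Since transposition in the position basis fixes $q$ and negates $p$, one obtains $\W_z^T=\W_{\bar z}$ at a single mode, hence for the bipartition (\ref{eq:decomposition:H}), $\langle\W(f)\rangle_{\rho^{T_1}}=\langle\W(f')\rangle_\rho$ with $\tilde{f'}=\mathbb{Q}\tilde f$, where $\mathbb{Q}=\bigl(\begin{smallmatrix}\idty & 0\\ 0 & \mathbb{P}\end{smallmatrix}\bigr)$ is the conjugation appearing in (\ref{eq:tildeM}). Applying Theorem \ref{thm:Weyl-Eigen} eigenstate by eigenstate, averaging over $\J_N$, and setting $\tilde A_k:=\mathbb{Q}M\chi_k(M)\mathbb{Q}$ (so that $\sum_k\tilde A_k=\tilde M$), the characteristic function of $\rho_N^{T_1}$ becomes
\[
\langle\W(f)\rangle_{\rho_N^{T_1}}=e^{-\frac14\langle\tilde f,\tilde M\tilde f\rangle}\cdot\frac1{|\J_N|}\sum_{\alpha\in\J_N}\prod_{k=1}^{|\Lambda|}L_{\alpha_k}\!\Bigl(\tfrac{\langle\tilde f,\tilde A_k\tilde f\rangle}{2}\Bigr).
\]
By Lemma \ref{lem:I-N}, the ensemble-averaged product of Laguerre polynomials collapses into a single (generalized) Laguerre polynomial in $\sum_k\langle\tilde f,\tilde A_k\tilde f\rangle/2=\langle\tilde f,\tilde M\tilde f\rangle/2$.

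Next, I construct the right-hand side of (\ref{eq:rhoT1-diag}). Williamson's theorem applied to the positive-definite symmetric $2|\Lambda|\times 2|\Lambda|$ matrix $\tilde M$ yields a real symplectic $\tilde S$ ($\tilde S^TJ\tilde S=J$) and a positive diagonal $D=\diag(d_1,\dots,d_{|\Lambda|})$ with $\tilde S^T\tilde M\tilde S=D\oplus D$; the $d_k$ are the claimed symplectic eigenvalues. Let $U\in\mathcal B(\mathcal H_\Lambda)$ be the Bogoliubov (metaplectic) unitary implementing $\tilde S^{-1}$, so that $U^*\W(f)U=\W(Tf)$ for the map $T$ whose real form is $\tilde T=\tilde S^{-1}$. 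Using the tensor factorization $\W(Tf)=\bigotimes_x\W_{(Tf)_x}$ and Theorem \ref{thm:rho1} on each factor,
\[
\big\langle\W(f)\big\rangle_{U(\bigotimes_k\rho_{d_k}^{(\alpha_k)})U^*}=\prod_{k=1}^{|\Lambda|}L_{\alpha_k}\!\Bigl(\tfrac{d_k|(Tf)_k|^2}{2}\Bigr)e^{-\frac14 d_k|(Tf)_k|^2}.
\]
A direct computation using Williamson gives $\sum_kd_k|(Tf)_k|^2=\langle\tilde{Tf},(D\oplus D)\tilde{Tf}\rangle=\langle\tilde f,\tilde S^{-T}(D\oplus D)\tilde S^{-1}\tilde f\rangle=\langle\tilde f,\tilde M\tilde f\rangle$, so after averaging over $\J_N$ and applying Lemma \ref{lem:I-N} a second time, the right-hand side of (\ref{eq:rhoT1-diag}) has characteristic function $e^{-\frac14\langle\tilde f,\tilde M\tilde f\rangle}$ times the same generalized Laguerre polynomial of $\langle\tilde f,\tilde M\tilde f\rangle/2$ obtained above.

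The two characteristic functions thus coincide for every $f$, and the uniqueness lemma yields the operator identity (\ref{eq:rhoT1-diag}). The main technical obstacle is the existence of the Bogoliubov unitary $U$ implementing an arbitrary real symplectic transformation on $\mathcal L^2(\R^\Lambda)$; this is the standard metaplectic representation, but it needs to be set up carefully in the present conventions. A conceptual point worth stressing is that individual transposed eigenstates $\rho_\alpha^{T_1}$ do not in general factor as $U(\bigotimes_k\rho_{d_k}^{(\alpha_k)})U^*$: only the $\J_N$-average does so, and the identification is possible precisely because Lemma \ref{lem:I-N} washes away the $\alpha$-dependent dependence on the individual $\tilde A_k$ and leaves only the scalar $\langle\tilde f,\tilde M\tilde f\rangle$.
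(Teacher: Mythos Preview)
Your proposal is correct and follows essentially the same route as the paper's proof: both compute the characteristic function of $\rho_N^{T_1}$ via the partial-transpose rule and Theorem~\ref{thm:Weyl-Eigen}, collapse the $\J_N$-average using Lemma~\ref{lem:I-N}, symplectically diagonalize $\tilde M$ by Williamson, pass to the metaplectic unitary, re-expand via Lemma~\ref{lem:I-N}, identify the factors through Theorem~\ref{thm:rho1}, and conclude by the uniqueness lemma from \cite{NSS2}. The only cosmetic difference is that the paper transforms one side into the other (working with $\langle\W(f)\rangle_{U^*\rho_N^{T_1}U}$), whereas you compute both sides separately and match them; your closing remark on why individual $\rho_\alpha^{T_1}$ do not factor is exactly the point the paper makes immediately after its proof.
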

Note that for any $\alpha\in\J_N$ and all symplectic eigenvalues $\{d_k,k=1,\ldots,|\Lambda|\}$, the operators $\{\rho_{d_k}^{(\alpha_k)}, k=1,\ldots,|\Lambda|\}$ are simultaneously diagonalizable with known eigenvalues given in  (\ref{eq:rhod1}), meaning that (\ref{eq:rhoT1-diag})  is an exact diagonalization of the partial transpose of $\rho_N$. In particular, the eigenvalues of $\rho_N^{T_1}$ are $\{\lambda_{n_1,\ldots,n_{|\Lambda|}};\ n_1,\ldots,n_{|\Lambda|}\in\mathbb{N}_0\}$ where
\begin{equation}
\lambda_{n_1,\ldots,n_{|\Lambda|}}=\frac{1}{|\J_N|}\sum_{\alpha\in\J_N}
\prod_{k=1}^{|\Lambda|}\left(\sum_{j=0}^{\alpha_k}
\sigma_{j,\alpha_k}(\zeta_{d_k})\omega_{n_k,j,\alpha_k}(\zeta_{d_k})\right).
\end{equation}
Since $\Tr\rho_{a}^{(\ell)}=1$ for any $a>0$ and $\ell\in\mathbb{N}_0$, we find that $\Tr\rho_N^{T_1}=1$, as it should be. Since it is clear and guaranteed that for all $N\in\mathbb{N}_0$, the state $\rho_N$ is entangled. And due to the complexity of $\lambda_{n_1,\ldots,n_{|\Lambda|}}$, one may want to check whether the Peres-Horodecki criterion is  satisfied \cite{PeresPPT1,HorodeckiPPT2}. Its says that the state is entangled whenever its partial transpose has some negative eigenvalues, the opposite is not generally correct, meaning that some entangled states may not be detected by the logarithmic negativity.
The following example deals with the case $N=1$, and describes the negative eigenvalues of $\rho_{N=1}^{T_1}$.
\begin{ex}
When $N=1$, $\J_{N=1}=\{e_j^T;\ j=1,\ldots,|\Lambda|\}$, here $\{e_j\}_j$ are the canonical basis of $\ell^2(\Lambda)$. The state $\rho_{N=1}$ is the statistical ensemble of the eigenstates of $H_\Lambda$ that are associated with exactly one excitation in the $j$-th vertex. Theorem \ref{thm:rho1} gives
\begin{equation}
\rho_{N=1}^{T_1}=\frac{1}{N}\sum_{j=1}^{|\Lambda|}
U\left(\rho_{d_j}^{(1)}\otimes \bigotimes_{k=1,\, k\neq j}^{|\Lambda|}\rho_{d_k}^{(0)}\right)U^*.
\end{equation}
Using (\ref{eq:rhod:0}) and (\ref{eq:sgn:1}), one can see that the eigenvalues $\lambda_{n_1,\ldots,n_{|\Lambda|}}$ of $\rho_{N=1}^{T_1}$ are negative, for example,  for the indices $(n_1,\ldots,n_{|\Lambda|})\in\mathbb{N}_0$ satisfying
\begin{equation}
n_k=\left\{
      \begin{array}{ll}
        0, & \hbox{if } d_k>1 \\
        \geq \lceil\frac{\zeta_{d_k}^2}{1-\zeta_{d_k}^2}\rceil+1 \text{ and even} & \hbox{if } d_k<1.
      \end{array}
    \right.
\end{equation}
\end{ex}
We will need to use the following lemma that provides a crucial part to the proof of Theorem \ref{thm:rhoT1:eigen}.
\begin{lem}\label{lem:I-N}
Fix $N\in\mathbb{N}_0$, and consider the corresponding set $\J_N$ defined in (\ref{eq:def:Jn}) then for any $x_1,\ldots,x_{|\Lambda|}\in\mathbb{R}$,
\begin{equation}\label{eq:I-N}
\sum_{\alpha\in \J_N}\prod_{k=1}^{|\Lambda|}L_{\alpha_k}(x_k)=\mathcal{Q}_N\left(\sum_{j=1}^{|\Lambda|}x_k\right),
\end{equation}
where $L_n(\cdot)$ is the Laguerre polynomial of degree $n$ and
\[
\mathcal{Q}_N(x):=L^{(|\Lambda|-1)}_N(x)=\sum_{j=0}^N (-1)^j\binom{N+|\Lambda|-1}{N-j}\frac{x^j}{j!},\ \ x\in\mathbb{R}
\]
is the $(|\Lambda|-1)$-generalized Laguerre polynomial of degree $N$.
\end{lem}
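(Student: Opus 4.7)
The plan is to prove Lemma \ref{lem:I-N} via generating functions. Recall the classical generating identity for the Laguerre polynomials,
\begin{equation*}
\sum_{n=0}^{\infty} L_n(x)\,t^n \;=\; \frac{1}{1-t}\,\exp\!\left(-\frac{xt}{1-t}\right), \qquad |t|<1,
\end{equation*}
and its generalization to the Laguerre polynomials with parameter $\alpha>-1$,
\begin{equation*}
\sum_{N=0}^{\infty} L_N^{(\alpha)}(y)\,t^N \;=\; \frac{1}{(1-t)^{\alpha+1}}\,\exp\!\left(-\frac{yt}{1-t}\right), \qquad |t|<1.
\end{equation*}
Both are standard and can be derived from the explicit coefficient expansion in (\ref{eq:def:Laguerre}).

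First I would fix $x_1,\ldots,x_{|\Lambda|}\in\mathbb{R}$ and consider the formal power series in $t$ obtained by multiplying $|\Lambda|$ copies of the first identity, one for each variable $x_k$. The product of the left-hand sides is precisely
\begin{equation*}
\prod_{k=1}^{|\Lambda|}\sum_{n_k=0}^{\infty} L_{n_k}(x_k)\,t^{n_k}
\;=\;\sum_{N=0}^{\infty}\,t^N\sum_{\alpha\in\J_N}\prod_{k=1}^{|\Lambda|} L_{\alpha_k}(x_k),
\end{equation*}
where grouping terms according to the total degree $N=\sum_k\alpha_k$ is precisely the sum over $\J_N$. On the other hand, the product of the right-hand sides telescopes via $e^{A}e^{B}=e^{A+B}$ to
\begin{equation*}
\frac{1}{(1-t)^{|\Lambda|}}\exp\!\left(-\frac{(\sum_{k=1}^{|\Lambda|}x_k)\,t}{1-t}\right),
\end{equation*}
which is the generating function of $L_N^{(|\Lambda|-1)}\!\bigl(\sum_k x_k\bigr)$ by the second identity with $\alpha=|\Lambda|-1$.

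Then I would conclude by matching coefficients of $t^N$: for $|t|$ small enough both series converge absolutely (each factor is entire in $x$ and analytic in $t$ for $|t|<1$), and so
\begin{equation*}
\sum_{\alpha\in\J_N}\prod_{k=1}^{|\Lambda|} L_{\alpha_k}(x_k) \;=\; L_N^{(|\Lambda|-1)}\!\left(\sum_{k=1}^{|\Lambda|}x_k\right) \;=\; \mathcal{Q}_N\!\left(\sum_{k=1}^{|\Lambda|}x_k\right),
\end{equation*}
which is precisely (\ref{eq:I-N}). The explicit polynomial formula for $\mathcal{Q}_N$ displayed in the statement is the standard expansion of the generalized Laguerre polynomial.

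There is no substantial obstacle: once one recognizes that the sum over $\J_N$ is the Cauchy-product coefficient, the identity reduces to the multiplicativity of the exponential and the classical generating functions. An alternative, essentially equivalent route is induction on $|\Lambda|$ using the Laguerre convolution identity $\sum_{j=0}^{N} L_j^{(\alpha)}(x)L_{N-j}^{(\beta)}(y) = L_N^{(\alpha+\beta+1)}(x+y)$ (with base case $\alpha=\beta=0$), which itself is proved by the same generating-function argument; I prefer the direct product-of-series route since it avoids an induction bookkeeping step.
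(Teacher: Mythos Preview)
Your proof is correct and is in fact the slicker route. The paper does not use generating functions at all: it expands each $L_{\alpha_k}(x_k)$ via (\ref{eq:def:Laguerre}), collects all terms of total degree $K$ in the $x_k$'s, swaps the order of the $\alpha$- and $\ell$-sums, evaluates the resulting inner sum $\sum_{\alpha}\prod_j\binom{\alpha_j}{\ell_j}$ as the single binomial $\binom{N+|\Lambda|-1}{K+|\Lambda|-1}$ via an iterated Vandermonde-type identity, and then invokes the multinomial theorem to recognize $(\sum_k x_k)^K$. This yields the coefficient $a_K=(-1)^K\binom{N+|\Lambda|-1}{K+|\Lambda|-1}/K!$ directly, matching the explicit formula for $\mathcal{Q}_N$.

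The trade-off is clear. Your argument is short and conceptual---once the two generating-function identities are recalled, the lemma is a one-line Cauchy-product computation---but it imports those identities as black boxes. The paper's argument is longer and more combinatorial, but it is entirely self-contained from the polynomial definition (\ref{eq:def:Laguerre}) and produces the closed form of $\mathcal{Q}_N$ along the way rather than invoking it. Your remark about the equivalent inductive route via the convolution identity $\sum_{j} L_j^{(\alpha)}(x)L_{N-j}^{(\beta)}(y)=L_N^{(\alpha+\beta+1)}(x+y)$ is also apt; the paper's proof can be read as an unrolled version of that induction done at the level of coefficients.
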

We include the proof of this lemma in Appendix \ref{proof:lemmaLaguerre-I-N}.
\begin{proof}(of Theorem \ref{thm:rhoT1:eigen}) A direct calculation gives that the expectation of the Weyl operators at the partial transpose of any state $\rho$ is given as
\begin{equation}\label{eq:Weyl:PT}
\langle \W(f)\rangle_{\rho^{T_1}}=\left\langle \W\left(\begin{pmatrix}
                                                    \idty & 0 \\
                                                    0 & \mathbb{P} \\
                                                  \end{pmatrix}
\tilde{f}\right)\right\rangle_{\rho}
\end{equation}
where $\mathbb{P}$ is the diagonal operator with diagonal elements given in (\ref{eq:P}). Let $\rho_\alpha$ be any eigenstate of $H_\Lambda$ associated with occupation number vector $\alpha\in\mathbb{N}_0^{|\Lambda|}$. Theorem \ref{thm:Weyl-Eigen} and (\ref{eq:Weyl:PT}) instantly give
\begin{equation}\label{eq:Weyl:eig:T1}
\langle \W(f)\rangle_{\rho_{\alpha}^{T_1}}=e^{-\frac{1}{4}\langle\tilde{f},\tilde{M}\tilde{f}\rangle}\prod_{k=1}^{| \Lambda|} L_{\alpha_k} \left(\frac{\langle\tilde{f},\tilde{M}\chi_k(\tilde{M})\tilde{f}\rangle}{2} \right) ,
\end{equation}
where $\tilde{M}$ is as in (\ref{eq:tildeM}) and $\chi_k(\tilde{M})$ is the spectral projection of $\tilde{M}$ defined the same way as $\chi_k(M)$ in (\ref{eq:chikM}).

Next, for the $N$-modes ensemble $\rho_N$, given in (\ref{eq:def:rhoN}), and for any $f\in\ell^2(\Lambda)$, the linearity of the trace and the partial transpose,  equation (\ref{eq:Weyl:eig:T1}), and Lemma \ref{lem:I-N} give
\begin{equation}\label{eq:Weyl:rhoN-M}
\langle \W(f)\rangle_{\rho_N^{T_1}}= \frac{1}{|\J_N|}\mathcal{Q}_N\left(\frac{\langle\tilde{f},\tilde{M}\tilde{f}\rangle}{2}\right)
e^{-\frac{1}{4}\langle\tilde{f},\tilde{M}\tilde{f}\rangle},
\end{equation}
where we used the resolution of the identity $\sum_{k=1}^{|\Lambda|}\chi_k(\tilde{M})=\idty$.
Since $\tilde{M}$ is real symmetric and positive definite (almost surely) then by the Williamson Theorem, see e.g., Theorem 8.11 in \cite{deGosson}, there exists a $2|\Lambda|\times 2|\Lambda|$ symplectic $S$ such that
\begin{equation}\label{eq:SMS}
S^T\tilde{M}S=\begin{pmatrix}
                D & 0 \\
                0 & D \\
              \end{pmatrix},\ \text{ where } D:=\diag\{d_1,\ldots,d_{|\Lambda|}\}.
\end{equation}
Here $d_j>0$ for all $j$, are the symplectic eigenvalues of $\tilde{M}$, which are the positive eigenvalues of the hermitian matrix $i\tilde{M}^{\frac{1}{2}}J\tilde{M}^{\frac{1}{2}}$, where $J=\begin{pmatrix}
           0 & -\idty \\
           \idty & 0 \\
         \end{pmatrix}
$. Note that the matrix $S$ does not depend on $N$ or $\J_N$.

Symplectic $S$ induces a unitary $U\in\mathcal{B}(\mathcal{H}_\Lambda)$ such that, see e.g., \cite{BD07,Shale62}
\begin{equation}\label{eq:SU}
\W(S\tilde{f})=U \W(\tilde{f}) U^*.
\end{equation}
This means that the expectation of the Weyl operator at $U^*\rho_N^{T_1}U$ is given as
\begin{equation}
\langle \W(f)\rangle_{U^*\rho_N^{T_1}U}=\langle \W(S\tilde{f})\rangle_{\rho_N^{T_1}}.
\end{equation}
Equation (\ref{eq:Weyl:rhoN-M}) and the symplectic diagonalization (\ref{eq:SMS}) give
\begin{eqnarray}\label{eq:Weyl-diag}
\langle \W(f)\rangle_{U^*\rho_N^{T_1}U}&=& \frac{1}{|\J_N|}\mathcal{Q}_N\left(\frac{1}{2}\langle\tilde{f},\begin{pmatrix}D & 0 \\ 0 & D \\\end{pmatrix}\tilde{f}\rangle\right)
\exp\left(-\frac{1}{4}\langle\tilde{f},\begin{pmatrix}D & 0 \\ 0 & D \\\end{pmatrix}\tilde{f}\rangle\right)\\
&=&
\frac{1}{|\J_N|}\mathcal{Q}_N\left(\sum_{j=1}^{|\Lambda|}\frac{d_k|f_k|^2}{2}\right)
e^{-\frac{1}{4}\sum_{j=1}^{|\Lambda|}d_k|f_k|^2}\nonumber.
\end{eqnarray}
Lemma \ref{lem:I-N} allows to write the polynomial $\mathcal{Q}_N(\cdot)$ back as a sum of products of Laguerre polynomials,
\begin{equation}
\langle \W(f)\rangle_{U^*\rho_N^{T_1}U}= \frac{1}{|\J_N|}\sum_{\alpha\in \J_N}\prod_{k=1}^{|\Lambda|}\left(L_{\alpha_k}\left(\frac{d_k|f_k|^2}{2}\right)e^{-\frac{1}{4}d_k|f_k|^2}\right).
\end{equation}
Theorem \ref{thm:rho1} defines the operators $\{\rho_{d_k}^{(\alpha_k)}\}_k$ for which
\begin{equation}
\langle \W(f)\rangle_{U^*\rho_N^{T_1}U}= \frac{1}{|\J_N|}\sum_{\alpha\in \J_N}\prod_{k=1}^{|\Lambda|}\langle \W_{f_k}\rangle_{\rho^{(\alpha_k)}_{d_k}}
= \frac{1}{|\J_N|}\sum_{\alpha\in \J_N}\Tr\left[\W(f)\bigotimes_{k=1}^{|\Lambda|}\rho^{(\alpha_k)}_{d_k}\right]
\end{equation}
where we used (\ref{eq:Weyl:tensor}). The linearity of the trace instantly yields
\begin{equation}
\langle \W(f)\rangle_{U^*\rho_N^{T_1}U}= \Tr\left[\W(f)\left(\frac{1}{|\J_N|}\sum_{\alpha\in \J_N}\bigotimes_{k=1}^{|\Lambda|}\rho^{(\alpha_k)}_{d_k}\right)\right]
\end{equation}
and since this is true for any $f\in\ell^2(\Lambda)$ then (\ref{eq:rhoT1-diag}) follows directly by Lemma 3.1 from \cite{NSS2}.
\end{proof}
We discuss here the difficulty we ran into when attempting the diagonalization of the energy eigenstates $\rho_\alpha$. The matrices in the family $\{\tilde{M}\chi_k(\tilde{M}),k=1,\ldots,|\Lambda|\}$ from (\ref{eq:Weyl:eig:T1}) are symmetric non-negative, but not strictly positive. Meaning that none of them is symplectic diagonalizable in the sense of Williamson Theorem as in (\ref{eq:SMS}). Our attempts to get around this problem did not succeed. The problem of obtaining bounds for the entanglements of the energy eigenstates of $H_\Lambda$ still stands as an interesting open problem.

\section{Proving the upper bound for the logarithmic negativity}\label{sec:LN:bound}
In this section we prove Theorem \ref{thm:AreaLaw} and Corollary \ref{cor:AreaLaw2}. We start with the following lemma that provides a bound for the logarithmic negativity of $\rho_N$ in terms of the absolute values of the matrix elements of $h_\Lambda^{-\frac{1}{2}}$.
\begin{lem}\label{lem:LN:upper1}
Fix $\Lambda_0\subset\Lambda$, for any $N\in\mathbb{N}_0$, and the corresponding N-modes ensemble $\rho_N$ as in (\ref{eq:def:rhoN}), we have the following bound for the logarithmic negativity of $\rho_N$ with respect to the decomposition (\ref{eq:decomposition:H}),
\begin{equation}\label{eq:LN:upper}
\mathcal{N}(\rho_N)\leq (2N+1)\|h_{\Lambda}^{\frac{1}{2}}\|\sum_{x\in\Lambda_0}
\sum_{y\in\Lambda\setminus\Lambda_0}|\langle\delta_x,h_{\Lambda}^{-\frac{1}{2}}\delta_y\rangle|.
\end{equation}
\end{lem}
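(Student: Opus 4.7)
The plan is to use Theorem \ref{thm:rhoT1:eigen} to reduce $\mathcal{N}(\rho_N)$ to a sum over the symplectic eigenvalues $d_1,\ldots,d_{|\Lambda|}$ of $\tilde{M}$, and then to bound that sum by the cross-boundary entries of $h_\Lambda^{-1/2}$. Starting from the diagonalization (\ref{eq:rhoT1-diag}), the triangle inequality for $\|\cdot\|_1$, multiplicativity of trace norms under tensor products, and (\ref{eq:def:f}) give
\[
\|\rho_N^{T_1}\|_1\ \leq\ \frac{1}{|\J_N|}\sum_{\alpha\in\J_N}\prod_{k=1}^{|\Lambda|}g_{d_k}(\alpha_k).
\]
A case split on $a\gtreqless 1$ in the definition of $g_a(\ell)$ yields the uniform inequality $\log g_a(\ell)\leq(2\ell+1)|\log a|$. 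Dominating the average by its maximum, then using $\alpha_k\leq N$ for $\alpha\in\J_N$, I obtain
\[
\mathcal{N}(\rho_N)\ \leq\ \max_{\alpha\in\J_N}\sum_{k=1}^{|\Lambda|}(2\alpha_k+1)|\log d_k|\ \leq\ (2N+1)\sum_{k=1}^{|\Lambda|}|\log d_k|.
\]

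The remaining and main step is to show $\sum_{k}|\log d_k|\leq\|h_\Lambda^{1/2}\|\sum_{x\in\Lambda_0,\,y\in\Lambda\setminus\Lambda_0}|\langle\delta_x,h_\Lambda^{-1/2}\delta_y\rangle|$ (up to an absolute constant, which can be absorbed later into $\tilde{C}$). From (\ref{eq:SMS}), the $d_k^2$ are the eigenvalues of the symmetric positive operator $A_0:=h_\Lambda^{-1/4}\mathbb{P}h_\Lambda^{1/2}\mathbb{P}h_\Lambda^{-1/4}$. Since $\det\tilde{M}=\det M=1$, one has $\prod_{k}d_k=1$, hence $\sum_{k}\log d_k=0$, and so
\[
\sum_{k=1}^{|\Lambda|}|\log d_k|\ =\ \sum_{d_k<1}|\log d_k^2|.
\]
I would then apply the elementary one-sided bound $|\log\mu|\leq 1/\mu-1$, valid for $0<\mu<1$, to each $\mu=d_k^2<1$, which yields
\[
\sum_{d_k<1}|\log d_k^2|\ \leq\ \sum_{d_k<1}\bigl(1/d_k^2-1\bigr)\ =\ \Tr(A_0^{-1}-I)_+\ \leq\ \|A_0^{-1}-I\|_1.
\]
Using the identity $A_0^{-1}-I=h_\Lambda^{1/4}(\mathbb{P}h_\Lambda^{-1/2}\mathbb{P}-h_\Lambda^{-1/2})h_\Lambda^{1/4}$ (which holds because $h_\Lambda^{1/4}h_\Lambda^{-1/2}h_\Lambda^{1/4}=\idty$), the operator-norm--trace-norm inequality $\|PXP\|_1\leq\|P\|^2\|X\|_1$ with $P=h_\Lambda^{1/4}$, and the observation that $\mathbb{P}h_\Lambda^{-1/2}\mathbb{P}-h_\Lambda^{-1/2}$ is supported on the off-diagonal $\Lambda_0\times(\Lambda\setminus\Lambda_0)$ blocks with entries $-2\langle\delta_x,h_\Lambda^{-1/2}\delta_y\rangle$, together with the entrywise bound $\|X\|_1\leq\sum_{x,y}|X_{xy}|$, completes the argument.

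The main obstacle is the last reduction. Because $\mathbb{P}h_\Lambda^{1/2}\mathbb{P}\cdot h_\Lambda^{-1/2}$ is not symmetric, the spectral calculus has to be carried out on the symmetric companion $A_0$, and the scalar inequality for $|\log\mu|$ must be chosen so that only $h_\Lambda^{-1/2}$ (and not $h_\Lambda^{1/2}$) appears in the cross-boundary sum. The identity $\prod_{k}d_k=1$ is essential here, as it lets me restrict the sum to the indices with $d_k<1$ and apply the one-sided bound $|\log\mu|\leq 1/\mu-1$, which naturally produces $A_0^{-1}-I$ and hence the $h_\Lambda^{-1/2}$ cross-boundary matrix elements appearing on the right-hand side of (\ref{eq:LN:upper}).
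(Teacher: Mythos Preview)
Your proof is correct and follows essentially the same route as the paper's: both reduce via Theorems \ref{thm:rhoT1:eigen} and \ref{thm:rho1} to a bound in terms of $\sum_k|\log d_k|$, exploit the reciprocal pairing of the symplectic eigenvalues (you via $\det\tilde M=1\Rightarrow\prod_k d_k=1$, the paper via the similarity $Z\sim Z^{-1}$ with $Z=h_\Lambda^{1/4}\mathbb{P}h_\Lambda^{-1/2}\mathbb{P}h_\Lambda^{1/4}=A_0^{-1}$), apply the scalar inequality $\log x\leq x-1$ to pass to $Z-\idty=h_\Lambda^{1/4}\bigl(\mathbb{P}h_\Lambda^{-1/2}\mathbb{P}-h_\Lambda^{-1/2}\bigr)h_\Lambda^{1/4}$, and then bound its trace norm by the cross-boundary entries of $h_\Lambda^{-1/2}$. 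The only difference is bookkeeping: your crude estimate $\log g_a(\ell)\leq(2\ell+1)|\log a|$ loses an extra factor of $2$ compared to the paper's direct use of $g_{d_k}(\alpha_k)\leq g_{d_k}(N)$ together with the case split on $d_k\gtrless 1$, but as you note this constant is absorbed into $\tilde C$ in Theorem~\ref{thm:AreaLaw}.
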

Again, in the degenerate case $N=0$, (\ref{eq:LN:upper}) reduces to the well known bound of the logarithmic negativity of the ground state, e.g., \cite{NSS2}.
\begin{proof}
The diagonalization of the partial transpose of $\rho_N$ in (\ref{eq:rhoT1-diag}) gives directly that
\begin{equation}
\|\rho^{T_1}_N\|_1\leq \frac{1}{|\J_N|}\sum_{\alpha\in \J_N} \prod_{k=1}^{|\Lambda|}\|\rho_{d_k}^{(\alpha_k)}\|_1.
\end{equation}
We recall here that the operators $\{\rho_{d_k}^{(\alpha_k)},\, k=1,\ldots,|\Lambda|\}$ are as in Theorem \ref{thm:rho1}, where $\{d_k>0,\, k=1,\ldots,|\Lambda|\}$ are the symplectic eigenvalues of $\tilde{M}$ given in (\ref{eq:tildeM}) and  $\alpha=(\alpha_1,\ldots,\alpha_{|\Lambda|})$ is the occupation number vector.
Theorem \ref{thm:rho1} implies that for every $k=1,\ldots,|\Lambda|$, and $\alpha=(\alpha_1,\ldots,\alpha_{|\Lambda|})\in\J_N$, we have the following bound
\begin{equation}
\max_{\alpha\in\J_N}\|\rho_{d_k}^{(\alpha_k)}\|_1\leq \max_{\alpha\in\J_N}g_{d_k}(\|\alpha\|_\infty)\leq g_{d_k}(N),
\end{equation}
where the functions $\{g_{d_k}\}_k$ are defined in (\ref{eq:def:f}),
meaning that
\begin{equation}\label{eq:TN:T1:upperbound}
\|\rho^{T_1}_N\|_1\leq \prod_{k=1}^{|\Lambda|} g_{d_k}(N).
\end{equation}
Thus, the logarithmic negativity can be bounded by
\begin{equation}\label{eq:LN:sums1}
\mathcal{N}(\rho_N)=\log\|\rho^{T_1}\|_1\leq \sum_{k=1}^{|\Lambda|}\log g_{d_k}(N)=\sum_{k: d_k\geq 1}\log d_k^{N}+\sum_{k: d_k<1}\log d_k^{-(N+1)}.
\end{equation}
Recall that $\{d_k\}_k$ are the positive eigenvalues of the hermitian
$L:=i\tilde{M}^{\frac{1}{2}}J\tilde{M}^{\frac{1}{2}}$
that has symmetric spectrum about zero (using $L^T=L$). A direct calculation gives
\begin{equation}
L^2=\begin{pmatrix}
      Z^{-1} & 0 \\
      0 & \mathbb{P}Z\mathbb{P} \\
    \end{pmatrix},
\end{equation}
where $Z=h_\Lambda^{\frac{1}{4}}\mathbb{P}h_\Lambda^{-\frac{1}{2}}
\mathbb{P}h_\Lambda^{\frac{1}{4}}$. Since $Z^{-1}$ is similar to its inverse $Z$, and hence to $\mathbb{P}Z\mathbb{P}$, the symplectic eigenvalues of $\tilde{M}$ are the square roots of the eigenvalues of one the diagonal block of $L^2$. For the sum over $\{k:d_k\geq 1\}$ in (\ref{eq:LN:sums1}), we use the matrix $\mathbb{P}Z\mathbb{P}$, and we use the matrix $Z^{-1}$ in the second sum. This gives the bound
\begin{equation}\label{eq:LN:upper1}
\mathcal{N}(\rho_N)\leq\frac{1}{2}\Tr\left[\mathcal{P}_N
\log\left(Z\right)\right]\text{ \,where\, }\mathcal{P}_N:=N\mathbb{P}P^{>1}\mathbb{P}+(N+1)P^{>1}
\end{equation}
and  $P^{>1}$ is the orthogonal projections onto the subspace $Z>1$.
Next, we proceed as in \cite{NSS2} and we include it for the sake of completeness. We rewrite $Z$ as follows
\begin{equation}
Z=h_\Lambda^{\frac{1}{4}}\mathbb{P}h_\Lambda^{-\frac{1}{2}}
\mathbb{P}h_\Lambda^{\frac{1}{4}}=
h_\Lambda^{\frac{1}{4}}\mathbb{P}\left[h_\Lambda^{-\frac{1}{2}},
\mathbb{P}\right]h_\Lambda^{\frac{1}{4}}+\idty,
\end{equation}
and we use the concavity of the logarithm to get
\begin{equation}
2\mathcal{N}(\rho_N)\leq \Tr\left[\mathcal{P}_N\left(
h_\Lambda^{\frac{1}{4}}\mathbb{P}[h_\Lambda^{-\frac{1}{2}},
\mathbb{P}]h_\Lambda^{\frac{1}{4}}
\right)\right]\leq \|h_\Lambda^{\frac{1}{4}}\mathcal{P}_Nh_\Lambda^{\frac{1}{4}}\|
\|\mathbb{P}[h_\Lambda^{-\frac{1}{2}},
\mathbb{P}]\|_1.
\end{equation}
We bound the $1$-norm by the sum of the absolute values of the matrix elements in any basis, and we note that
\begin{equation}
\|h_\Lambda^{\frac{1}{4}}\mathcal{P}_Nh_\Lambda^{\frac{1}{4}}\|\leq (2N+1)\|h_\Lambda^{\frac{1}{4}}\|^2=(2N+1)\|h_\Lambda^{\frac{1}{2}}\|
\end{equation}
to obtain
\begin{equation}
2\mathcal{N}(\rho_N)\leq (2N+1)\|h_\Lambda^{\frac{1}{2}}\|\sum_{x,y\in\Lambda}
|\langle\delta_x,\mathbb{P}[h_\Lambda^{-\frac{1}{2}},
\mathbb{P}]\delta_y\rangle|.
\end{equation}
The desired bound is then proven by observing that
\begin{equation}
\langle\delta_x,\mathbb{P}[h_\Lambda^{-\frac{1}{2}},\mathbb{P}]\delta_y\rangle
=\left\{
   \begin{array}{ll}
     0 & \hbox{if } x,y\in\Lambda_0\text{ or }x,y\in\Lambda\setminus\Lambda_0 \\
     -2\langle\delta_x,h_\Lambda^{-\frac{1}{2}}\delta_y\rangle & \hbox{otherwise.}
   \end{array}
 \right.
\end{equation}
\end{proof}
Next, we present the proof of the area law for the $N$-modes ensemble $\rho_N$.
\begin{proof} (of Theorem \ref{thm:AreaLaw})
By averaging the disorder in the bound in Lemma \ref{lem:LN:upper1} and using (\ref{eq:def:spectrumh}) and the eigencorrelator localization (\ref{eq:ecl}) we obtain
\begin{equation}
\mathbb{E}(\mathcal{N}(\rho_N))\leq C(2N+1)(4d\lambda+k_{\max})^{1/2}\sum_{x\in\Lambda_0,\ y\in\Lambda\setminus\Lambda_0}e^{-\mu|x-y|}.
\end{equation}
For each $x\in\Lambda_0$ and $y\in\Lambda\setminus\Lambda_0$ there is at least one $z\in\partial\Lambda_0$ such that $|x-y|=|x-z|+|y-z|$, thus
\begin{eqnarray}
\mathbb{E}(\mathcal{N}(\rho_N))&\leq& C(2N+1)(4d\lambda+k_{\max})^{1/2}\sum_{z\in\partial\Lambda_0}\sum_{{\tiny
\begin{array}{c}
  x\in\Lambda_0,\ y\in\Lambda\setminus\Lambda_0 \\
  |x-y|=|x-z|+|y-z|
\end{array}}}e^{-\mu|x-z|}e^{-\mu|y-z|}\\
&\leq& C(2N+1)(4d\lambda+k_{\max})^{1/2}\sum_{z\in\partial\Lambda_0}
\left(\sum_{x\in\mathbb{Z}^d}e^{-\mu|x-z|}\right)^2\nonumber,
\end{eqnarray}
which gives the bound in (\ref{eq:AreaLaw}) with an explicit value for the constant $\tilde{C}$.
\end{proof}
We finally comment on how the above proof of Lemma \ref{lem:LN:upper1} changes to prove Corollary \ref{cor:AreaLaw2}.
\begin{proof} (of Corollary \ref{cor:AreaLaw2})
First, note that the symplectic matrix $S$, that diagonalizes $\tilde{M}$ in (\ref{eq:SMS}), does not depend on $N$, giving that its induced unitary operator $U$ in (\ref{eq:SU}) is also independent on $N$. Thus, (\ref{eq:rhoT1-diag}) gives that for any $L\in\mathbb{N}_0$,
\begin{equation}
U^*\rho_L^{T_1}U=\frac{1}{|\J_L|}\sum_{\alpha\in \J_L}\bigotimes_{k=1}^{|\Lambda|}\rho_{d_k}^{(\alpha_k)}.
\end{equation}
Thus, for any $\omega\in\Omega_N$, defined in (\ref{eq:def:Omega}), and the corresponding state $\rho_{\omega,N}$, defined in (\ref{eq:def:tilderhoN}), and using the linearity of the partial transpose,
\begin{equation}
U^*\rho_{\omega,N}^{T_1}U=\sum_{L=0}^{N}
\frac{\omega_L}{|\J_L|}
\sum_{\alpha^{(L)}\in |\J_L|}\bigotimes_{k=1}^{|\Lambda|}
\rho_{d_k}^{(\alpha^{(L)}(k))}.
\end{equation}
Theorem  \ref{thm:rho1} gives that
$
\|\rho_{d_k}^{(\alpha^{(L)}(k))}\|_1\leq g_{d_k}(\|\alpha^{(L)}\|_\infty)\leq g_{d_k}(N)
$
for all  $k=1,\ldots,|\Lambda|$,
where the functions $\{g_{d_k}\}_k$ are as in (\ref{eq:def:f}).
Meaning that the trace norm of $\rho_{\omega,N}^{T_1}$ is bounded as
$
\|\rho_{\omega,N}^{T_1}\|_1\leq
\prod_{k=1}^{|\Lambda|}g_{d_k}(N)
$
which coincides with the upper bound for the trace norm of the partial transpose of $\rho_N$ in (\ref{eq:TN:T1:upperbound}).
\end{proof}
\appendix
\section{The expected energy at the $N$-modes ensemble}\label{sec:App:A}
In this appendix we prove the following lemma,
\begin{lem}\label{sec:App:Energy}
For any $N\in\mathbb{N}_0$ and the corresponding $N$-modes ensemble $\rho_N$ defined in (\ref{eq:def:rhoN}), the expected energy of the harmonic Hamiltonian $H_\Lambda$ given in (\ref{eq:def:H}) at $\rho_N$ is
\begin{equation}
\langle H_\Lambda\rangle_{\rho_N}=\sum_{k=1}^{|\Lambda|}\gamma_k\left(1+\frac{2N}{|\Lambda|}\right),
\end{equation}
where $\{\gamma_k\}_k$ are the eigenvalues of $h_\Lambda^{\frac{1}{2}}$ defined in (\ref{eq:def:h}).
\end{lem}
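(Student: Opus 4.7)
The plan is to combine the second-quantized form of $H_\Lambda$ with the explicit definition of $\rho_N$ and then exploit the permutation symmetry of the index set $\J_N$ to compute the average.

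First, I would use the spectral formula (\ref{eq:H:eigenvalues}) together with the fact that $\{\psi_\alpha\}$ is an orthonormal eigenbasis of $H_\Lambda$. Cyclicity of the trace combined with the definition (\ref{eq:def:rhoN}) gives
\begin{equation}
\langle H_\Lambda\rangle_{\rho_N} \;=\; \frac{1}{|\J_N|}\sum_{\alpha\in\J_N}\langle\psi_\alpha, H_\Lambda\psi_\alpha\rangle \;=\; \frac{1}{|\J_N|}\sum_{\alpha\in\J_N}\sum_{k=1}^{|\Lambda|}\gamma_k(2\alpha_k+1).
\end{equation}
The term $\sum_{\alpha\in\J_N}\sum_k\gamma_k$ contributes exactly $|\J_N|\sum_k\gamma_k$, producing the $\sum_k\gamma_k$ piece of the claimed identity after dividing by $|\J_N|$.

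The main (and only non-trivial) step is to handle $\sum_{\alpha\in\J_N}\alpha_k$. Here I would invoke the permutation symmetry of $\J_N$: the set $\J_N=\{\alpha\in\mathbb{N}_0^{|\Lambda|}:\sum_j\alpha_j=N\}$ is invariant under any permutation of the coordinates, so the sum $S_k:=\sum_{\alpha\in\J_N}\alpha_k$ is independent of $k$. Summing over $k$ gives
\begin{equation}
|\Lambda|\,S_k \;=\; \sum_{k=1}^{|\Lambda|}\sum_{\alpha\in\J_N}\alpha_k \;=\; \sum_{\alpha\in\J_N}\Big(\sum_{k=1}^{|\Lambda|}\alpha_k\Big) \;=\; N\,|\J_N|,
\end{equation}
so $S_k = N|\J_N|/|\Lambda|$ for every $k$.

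Plugging this back into the expression for $\langle H_\Lambda\rangle_{\rho_N}$, the $2\alpha_k$ term contributes $\frac{2}{|\J_N|}\sum_k\gamma_k\cdot\frac{N|\J_N|}{|\Lambda|}=\frac{2N}{|\Lambda|}\sum_k\gamma_k$, and adding the $\sum_k\gamma_k$ piece from before yields the claimed formula $\sum_k\gamma_k(1+2N/|\Lambda|)$. There is no genuine obstacle here; the argument is an elementary combinatorial symmetry observation, and no properties of the random couplings or of the Bogoliubov transformation are needed beyond the diagonalized form (\ref{eq:def:H-bosons}) and its eigenvalues (\ref{eq:H:eigenvalues}).
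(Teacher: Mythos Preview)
Your proof is correct and is in fact cleaner than the paper's. Both arguments start from the same expression $\frac{1}{|\J_N|}\sum_{\alpha\in\J_N}\sum_k\gamma_k(2\alpha_k+1)$ and both invoke the permutation symmetry of $\J_N$ to reduce the problem to evaluating $\sum_{\alpha\in\J_N}\alpha_1$. The difference lies in how this last sum is handled: the paper computes it explicitly via stars-and-bars, obtaining $|\J_N|=\binom{N+|\Lambda|-1}{N}$ and $\sum_{\alpha\in\J_N}\alpha_1=\binom{N+|\Lambda|-1}{|\Lambda|}$, and then checks that the ratio is $N/|\Lambda|$. You bypass these explicit formulas entirely by the double-counting identity $|\Lambda|\,S_k=\sum_{\alpha}\sum_k\alpha_k=N|\J_N|$, which yields $S_k=N|\J_N|/|\Lambda|$ directly. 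Your route is shorter and requires no combinatorial identities; the paper's route has the minor side benefit of recording the exact cardinality $|\J_N|$, which is not actually needed for the lemma.
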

\begin{proof}
Using (\ref{eq:def:rhoN}) and (\ref{eq:H:eigenvalues}) we get
\begin{equation}
\langle H_\Lambda\rangle_{\rho_N}=\frac{1}{|\J_N|}\sum_{\alpha\in\J_N}\sum_{k=1}^{|\Lambda|}\gamma_k(2\alpha_k+1)
=\sum_{k=1}^{|\Lambda|}\gamma_k+\frac{2}{|\J_N|}\sum_{\alpha\in\J_N}\sum_{k=1}^{|\Lambda|}\gamma_k\alpha_k.
\end{equation}
And due to the symmetry of the set $\J_N$, we get
\begin{equation}
\langle H_\Lambda\rangle_{\rho_N}=\sum_{k=1}^{|\Lambda|}\gamma_k\left(1+\frac{2}{|\J_N|}\sum_{\alpha\in\J_N} \alpha_1\right).
\end{equation}
A counting argument gives
\begin{equation}\label{eq:energy:pf:main-frml}
|\J_N|=\binom{N+|\Lambda|-1}{N}\quad\text{and } \sum_{\alpha\in\J_N} \alpha_1=\binom{N+|\Lambda|-1}{|\Lambda|},
\end{equation}
which gives directly the desired formula.
The formula for $|\J_N|$ follows from the ``multiset coefficient'' or the ``stars and bars'' in the combinatorial mathematics \cite{Feller50,Stanley97}.

To find the second sum in (\ref{eq:energy:pf:main-frml}), $N$ particles need to be decomposed as $(N-k)+k$ with $0\leq k\leq N-1$, where we need to place $(N-k)$ particles in the first slot and $k$ particles in the rest $(|\Lambda|-1)$ slots. The latter is performed, using the first result in (\ref{eq:energy:pf:main-frml}), in $\binom{k+|\Lambda|-2}{k}$ ways, thus
\begin{equation}
\sum_{\alpha\in\J_N}\alpha_1=\sum_{k=0}^{N-1}(N-k)
\binom{k+|\Lambda|-2}{k}.
\end{equation}
A careful inspection of this sum using the elementary formula
\begin{equation}
\sum_{\alpha=0}^{n}\binom{\alpha+\ell}{\alpha}=\binom{n+\ell+1}{n}
\end{equation}
 gives the desired formula.
\end{proof}
\section{Proof of Lemma \ref{lem:I-N}}\label{proof:lemmaLaguerre-I-N}
Using (\ref{eq:def:Jn}) and (\ref{eq:def:Laguerre}), the left hand side of (\ref{eq:I-N}) expands as
\begin{equation}
\sum_{\alpha\in\J_N}\prod_{k=1}^{|\Lambda|}L_{\alpha_k}(x_k)=\sum_{\alpha_1+\ldots+\alpha_{|\Lambda|}=N}\ \ \sum_{\ell_1,\ldots,\ell_{|\Lambda|}=0}^{\alpha_1,\ldots,\alpha_{|\Lambda|}}\ \
(-1)^{\sum_j\ell_j}\prod_{j=1}^{|\Lambda|}\binom{\alpha_j}
{\ell_j}\frac{x_1^{\ell_1}\ldots x_{|\Lambda|}^{\ell_{|\Lambda|}}}{\ell_1!\ldots \ell_{|\Lambda|}!}.
\end{equation}
The terms in this sum of degree $0\leq K\leq N$ are precisely given as
\begin{equation}
(-1)^{K}\sum_{\alpha_1+\ldots+\alpha_{|\Lambda|}=N}\ \
\sum_{{\tiny\begin{array}{c}
        \ell_1,\ldots,\ell_{|\Lambda|}=0 \\
        \ell_1+\ldots+\ell_{|\Lambda|}=K
      \end{array}}
}^{\alpha_1,\ldots,\alpha_{|\Lambda|}}\ \
\prod_{j=1}^{|\Lambda|}\binom{\alpha_j}{\ell_j}
\frac{x_1^{\ell_1}\ldots x_{|\Lambda|}^{\ell_{|\Lambda|}}}{\ell_1!\ldots \ell_{|\Lambda|}!}.
\end{equation}
Then by changing the order of the two big summations
\begin{equation}
(-1)^{K}\sum_{{\tiny\begin{array}{c}
        \ell_1,\ldots,\ell_{|\Lambda|}=0 \\
        \ell_1+\ldots+\ell_{|\Lambda|}=K
      \end{array}}
}^{N}\ \ \frac{x_1^{\ell_1}\ldots x_{|\Lambda|}^{\ell_{|\Lambda|}}}{\ell_1!\ldots \ell_{|\Lambda|}!}\ \
\sum_{{\tiny\begin{array}{c}
        \alpha_1=\ell_1,\ldots,\alpha_{|\Lambda|}=
        \ell_{|\Lambda|} \\
        \alpha_1+\ldots+\alpha_{|\Lambda|}=N\\
      \end{array}}}^N\ \
\prod_{j=1}^{|\Lambda|}\binom{\alpha_j}{\ell_j}.
\end{equation}
The big sum over the $\alpha$'s is a constant  for any given set $(\ell_1,\ldots,\ell_{|\Lambda|})$ with $\sum_j\ell_j=K$, in particular,
\begin{equation}\label{sum-alphas}
\sum_{{\tiny\begin{array}{c}
        \alpha_1=\ell_1,\ldots,\alpha_{|\Lambda|}=
        \ell_{|\Lambda|} \\
        \alpha_1+\ldots+\alpha_{|\Lambda|}=N\\
      \end{array}}}^N\ \
\prod_{j=1}^{|\Lambda|}\binom{\alpha_j}{\ell_j}
\chi_{\ell_1+\ldots+\ell_{|\Lambda|}=K}=
\binom{N+|\Lambda|-1}{K+|\Lambda|-1}.
\end{equation}
Which follows from the use of the identity
\begin{equation}
\binom{n+1}{k+1}=\sum_{\alpha=\ell}^{n}
\binom{\alpha}{\ell}\binom{n-\alpha}{k-\ell}, \quad \text{for}\quad 0\leq\ell\leq k.
\end{equation}
Finally, we use the fact that for any $m$ and $n$ in $\mathbb{N}$,
\begin{equation}
(x_1+x_2+\ldots+x_m)^n=\sum_{k_1+k_2+\ldots+k_m=n}\frac{n!}{k_1!\ k_2!\ldots k_m!}x_1^{k_1}x_2^{k_2}\ldots x_m^{k_m}
\end{equation}
we conclude that the sum of all terms of degree $0\leq K\leq N$ from the left hand side of (\ref{eq:I-N}) can be written as
\begin{equation}
a_K\left(\sum_{j=1}^{|\Lambda|}x_j\right)^K, \text{\, where \, }a_K:=(-1)^K\binom{N+|\Lambda|-1}{K+|\Lambda|-1}\frac{1}{K!}.
\end{equation}
This completes the proof of the lemma with the polynomial $\mathcal{Q}_N(x):=\sum_{j=0}^N a_j x^j$.

\end{document}